\algnewcommand\algorithmicswitch{\textbf{switch}}
\algnewcommand\algorithmiccase{\textbf{Case}}
\algnewcommand\algorithmicassert{\texttt{assert}}
\algnewcommand\Assert[1]{\State \algorithmicassert(#1)}
\newcommand{\solution}{MapFirst}
\newcommand{\topoa}{Att}
\newcommand{\topob}{Cernet}
\newcommand{\topoe}{Cogentco}
\newtheorem{mypro}{Proposition}
\newcommand{\qedd}{\hfill $\square$}
\begin{document}
\title{\LARGE{Joint Switch Upgrade and Controller Deployment in \\Hybrid Software-Defined Networks}}
\author{
{\IEEEauthorblockN{Zehua~Guo, Weikun~Chen, Ya-Feng~Liu,~\IEEEmembership{Senior Member,~IEEE,} Yang~Xu,~\IEEEmembership{Member,~IEEE,} Zhi-Li~Zhang,~\IEEEmembership{Fellow,~IEEE}}}
\thanks{ Z. Guo and Z.-L. Zhang were supported in part by NSF grants CNS-1411636, CNS-1618339, CNS-1617729, CNS-1814322 and CNS-1836772. W. Chen and Y.-F. Liu were supported in part by National Natural Science Foundation of China (NSFC) grants 11671419, 11571221, 11688101, and 11631013, and in part by Beijing Natural Science Foundation grant L172020.}
\thanks{Z. Guo and Z.-L. Zhang are with the Department of Computer Science and Engineering, University of Minnesota Twin Cities, Minneapolis, MN 55455 USA (e-mail: guolizihao@hotmail.com, zhzhang@cs.umn.edu).}
\thanks{W. Chen and Y.-F. Liu are with the State Key Laboratory of Scientific and Engineering Computing, Institute of Computational Mathematics and Scientific/Engineering Computing, Academy of Mathematics and Systems Science, Chinese Academy of Sciences, Beijing 100190, China (e-mail: cwk@lsec.cc.ac.cn, yafliu@lsec.cc.ac.cn).}
\thanks{Y. Xu is with the School Of Computer Science, Fudan University, Shanghai, 200433 China (e-mail: xuy@fudan.edu.cn).}
\thanks{Corresponding author: Ya-Feng Liu.}}

\maketitle

\begin{abstract}
To improve traffic management ability, Internet Service Providers (ISPs) are gradually upgrading legacy network devices to programmable devices that support Software-Defined Networking (SDN). The coexistence of legacy and SDN devices gives rise to a hybrid SDN. Existing hybrid SDNs do not consider the potential performance issues introduced by a centralized SDN controller: flow requests processed by a highly loaded controller may experience long tail processing delay; inappropriate multi-controller deployment could increase the propagation delay of flow requests.

In this paper, we propose to jointly consider the deployment of SDN switches and their controllers for hybrid SDNs. We formulate the joint problem as an optimization problem that maximizes the number of flows that can be controlled and managed by the SDN and minimizes the propagation delay of flow requests between SDN controllers and switches under a given upgrade budget constraint. We show this problem is NP-hard. To efficiently solve the problem, we propose some techniques (e.g., strengthening the constraints and adding additional valid inequalities) to accelerate the global optimization solver for solving the problem for small networks and an efficient heuristic algorithm for solving it for large networks. The simulation results from real network topologies illustrate the effectiveness of the proposed techniques and show that our proposed heuristic algorithm uses a small number of controllers to manage a high amount of flows with good performance. 
\end{abstract}

\begin{IEEEkeywords}
Complexity analysis, controller deployment, heuristic algorithm, hybrid SDN, switch upgrade, upgrade budget
\end{IEEEkeywords}

\section{Introduction} 
Software-Defined Networking (SDN) has been widely studied and gradually adapted for campus networks \cite{Mckeown2008OpenFlow}, data center networks \cite{bates2014let}, Wide Area Networks (WANs) \cite{Jain2013B4}, enterprise networks \cite{jin2017magneto}, and Internet exchange points \cite{Gupta2013SDX}. Due to the cost and operational considerations, SDN technology is usually deployed in an incremental fashion. In particular, at each time of network upgrade, only a set of selected legacy network devices (i.e., layer-3 routers and layer-2 switches) are upgraded to SDN switches. AT\&T converted 34\% of its network to SDN by the end of 2016 and virtualized 55\% of its network to software by the end 2017. Its final goal is to reach 75\% softwarization of its network by 2020 \cite{att}. Therefore, the legacy network devices and SDN switches may coexist for a long time. In this paper, we will refer to such a network as a hybrid SDN.

A WAN usually consists of many network devices at geo-distributed locations. A straightforward method to upgrade legacy network devices in WANs to SDN switches is based on the locations of network devices, for example, upgrade a part of WAN with network devices in proximity. The partial upgraded network can enjoy the benefit of SDN, but the performance improvement of the entire network is limited. An efficient solution for network providers is to spread the benefit of upgraded SDNs in the entire network. Based on this consideration, existing studies proposed to upgrade legacy network devices in WANs to SDN switches for different reasons or with different motivations, such as traffic engineering \cite{agarwal2013traffic}\cite{hong2016incremental}, flexible routing \cite{vissicchio2015central}, link failure recovery \cite{chu2015congestion}, power saving \cite{wang2016saving}\cite{jia2018intelligent}, and safe update \cite{vissicchio2017safe}. Essentially, the benefit of SDN is to flexibly control flows. Once a flow traverses an SDN switch, its forwarding path can be flexibly controlled. We call such traffic the programmable traffic. The selection of switches to be upgraded to SDN switches has a great impact on the network's programmable performance. In a WAN, switches tend to have quite different numbers of flows. If we randomly select some of them to upgrade, we may not be able to achieve our objective to maximize the number of programmable flows. Some studies aim to maximize the amount of programmable traffic under the constraint of a given upgrade budget \cite{poularakis2017one}\cite{jia2016incremental}. However, the impact of SDN controller on the network upgrade is not taken into consideration in these works.

The control plane of SDN has evolved from one single controller to multiple controllers to circumvent the limited computational resource of a controller server and avoid single point of failure. For a large wide-area SDN, multiple distributed controllers are physically deployed at different locations to achieve the function of a logically centralized control plane, and the controllers synchronize with each other to guarantee the consistency of the entire network \cite{guo2014improving}\cite{gude2008nox}. To avoid the single-point-of-failure problem, backup controllers maybe deployed \cite{onos}\cite{odl}. If one controller instance crashes, other active instances will still work without service interruption. Popular SDN controllers (e.g., ONOS and OpenDayLight) usually use three controllers to provide resilient service at one location, and the three controllers communicate with each other (e.g., using Raft \cite{ongaro2014search}) to guarantee the network state consistency \cite{ODLcluster}\cite{onoscluster}. 

Deploying multiple controllers in a hybrid SDN should take the following two factors into account. First, controllers should be able to process flow requests from the upgraded SDN switches in a timely manner. If a controller is overloaded, the requests handled by it may suffer from long-tail latency \cite{xie2018cutting}, which might significantly degrade the network performance \cite{ksentini2016using}. Second, the deployment locations of controllers affect the propagation delay of flow requests and network state pulling because the propagation delay in WANs is usually a significant part of the total delay \cite{heller2012controller}\cite{yao2014capacitated}. Existing works \cite{agarwal2013traffic}\cite{hong2016incremental}\cite{poularakis2017one}\cite{jia2016incremental} for the network upgrade did not consider the above factors, which may lead to some undesirable performance degradations. We detail the two factors in Section \ref{motivation}. 

In this paper, we consider the above two factors and propose two objectives: (1) maximizing the number of flows managed by SDN and (2) minimizing the propagation delay of flow requests from SDN switches with an upgrade budget constraint. The first objective aims to control as many flows as possible by upgrading legacy devices to SDN switches, while the second objective aims to reduce the propagation delay between the SDN controllers and switches by effectively deploying a few controllers near SDN switches. Our problem is to find the locations of upgraded switches, the locations of deployed controllers, and the mappings between the controllers and the upgraded switches (i.e., which controllers control which upgraded SDN switches) to maximize the number of controlled programmable flows and at the same time minimize the propagation delay between the controllers and upgraded switches under individual controllers' processing ability and upgrade budget constraints. We first formulate a two-stage optimization problem that optimizes one objective at each stage and then transform the two-stage problem into a one-stage problem to simplify the solution procedure. We prove that with a careful choice of the parameter the optimal solution of the one-stage formulation is also the optimal solution of the two-stage formulation. 

It is worth noting that our work is different from the controller deployment in pure SDNs. In pure SDNs, all switches are SDN switches, and the controllers' deployment only needs to consider two aspects: (1) the controllers' location and number and (2) mapping between SDN switches and controllers. In hybrid SDNs, our problem needs to consider one more aspect: the location and number of the upgraded SDN switches. In our problem, the three aspects are related to each other, and our problem in hybrid SDNs is more complicated than the controllers' deployment in pure SDNs.

To efficiently solve the problem, we analyze the problem's structure and propose several solutions. For the problem in small networks, we accelerate the global optimization solver by proposing a new problem formulation. For the problem in large networks, to further improve the computational efficiency, we propose a heuristic algorithm named \solution, which first orders the (relaxed) mapping variables between controllers and upgraded switches according to their importance/weights (obtained by solving a linear program relaxation) and then sequentially determines the (binary) mapping variables based on their impacts on the objective function. 

We conduct simulations using real network topologies from Topology Zoo \cite{6027859}. The simulation results on multiple topologies verify the effectiveness and efficiency of the new problem formulation. We further compare \solution \ with optimal solutions and other heuristic algorithms, and the results show (1) \solution \ outperforms other heuristic algorithms by using a small number of controllers to facilitate a high amount of flows from upgraded SDN switches, and (2) compared to the optimal solution, \solution \ is able to achieve a comparable performance but enjoys a significantly low complexity. 

The contributions of the paper are summarized as follows:
\begin{enumerate}
\item Novel problem formulation: We identify the impact of the control plane on the network upgrade and formulate an optimization problem to guarantee the performance of the hybrid SDN by jointly upgrading switches and deploying controllers.
\item Efficient solutions: We propose efficient exact and heuristic solutions to solve the problem for both small and large networks. Our simulation results (on real network topologies) demonstrate that our exact solution significantly accelerates the optimization solver, and our proposed \solution \ is able to return a high-quality solution with a significant less CPU time.
\item Theoretical analysis: We provide some analysis on the problem and the solutions. For the problem, we provide a rigorous NP-hardness proof and shed a useful insight that the problem (probably) does not admit constant-ratio approximation algorithms to only approximate the total delay in our problem. In addition, we also identify a special case of the problem which is strongly polynomial time solvable. For the proposed \solution, we analyze its worst-case complexity and prove that it is able to return the global solution of the problem if each controller can control at most one SDN switch.
\end{enumerate}

The rest of this paper is organized as follows: Section \ref{motivation} illustrates the motivation of the paper with some examples. Section \ref{problem} formulates the Joint Switch Upgrade and Controller Deployment (JSUCD) problem, and Section \ref{analysis} analyzes the parameter selection and the complexity of the problem. Section \ref{solution} introduces exact and heuristic solutions for the JSUCD problem. Section \ref{simulation} presents the simulation results and analysis. Section \ref{relatedwork} introduces the related works. Section \ref{conclusion} concludes the paper and presents our future work.

\begin{figure*}[t]
\centering
\subfigure[Network composition]{
\includegraphics[width=1.4in]{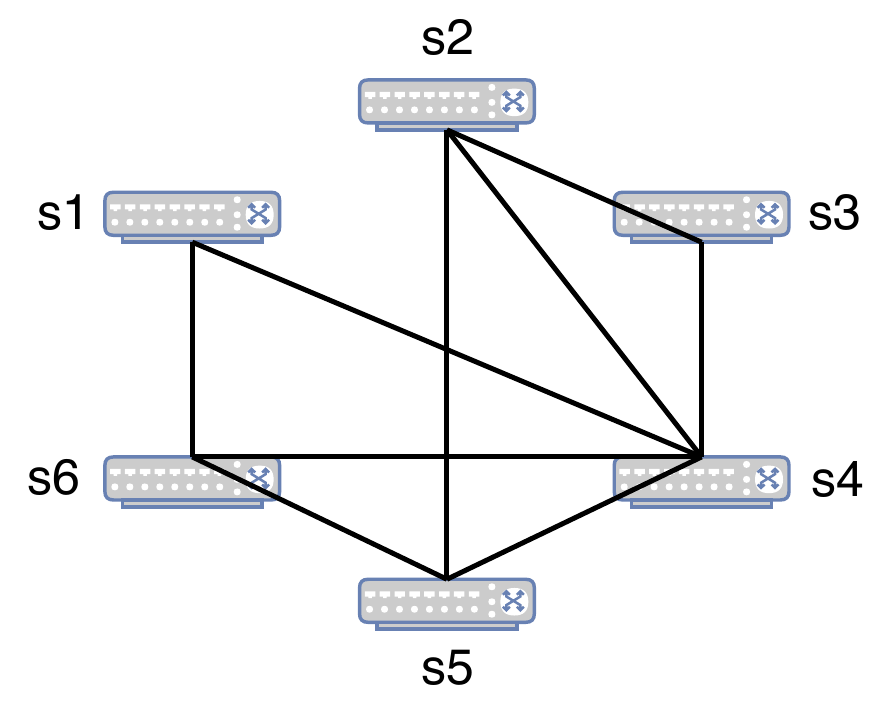}
\label{fig:network}
}
\subfigure[Three switch selections (e.g., red, green, and blue) to maximize the benefit of flexible flow control from SDN]{
\includegraphics[width=1.4in]{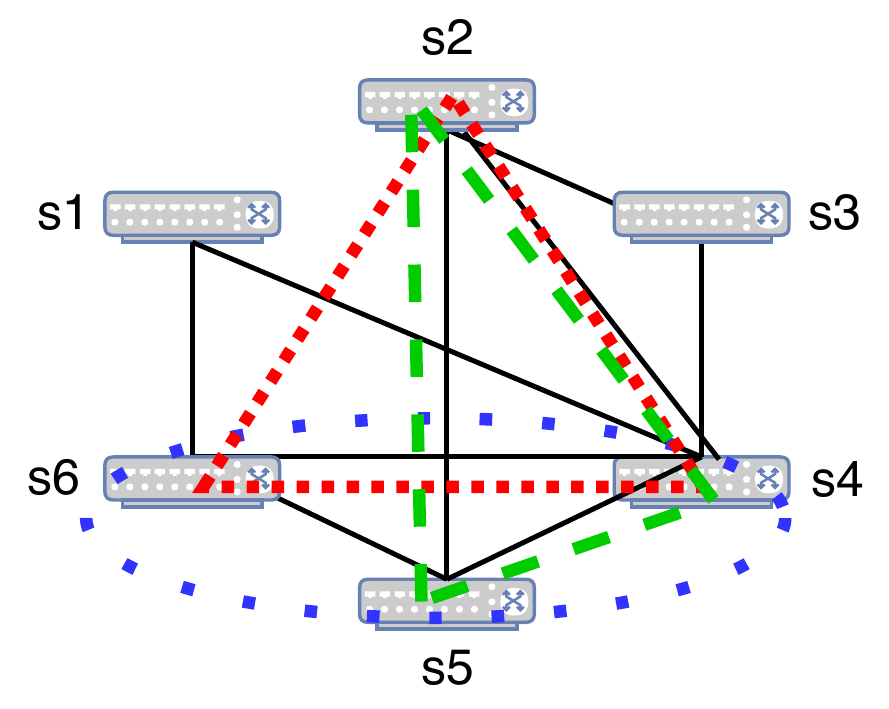}
\label{fig:sw_select}
}
\subfigure[With blue switch selection, deploying one controller cannot satisfy the flow requests from SDN switches]{
\includegraphics[width=1.4in]{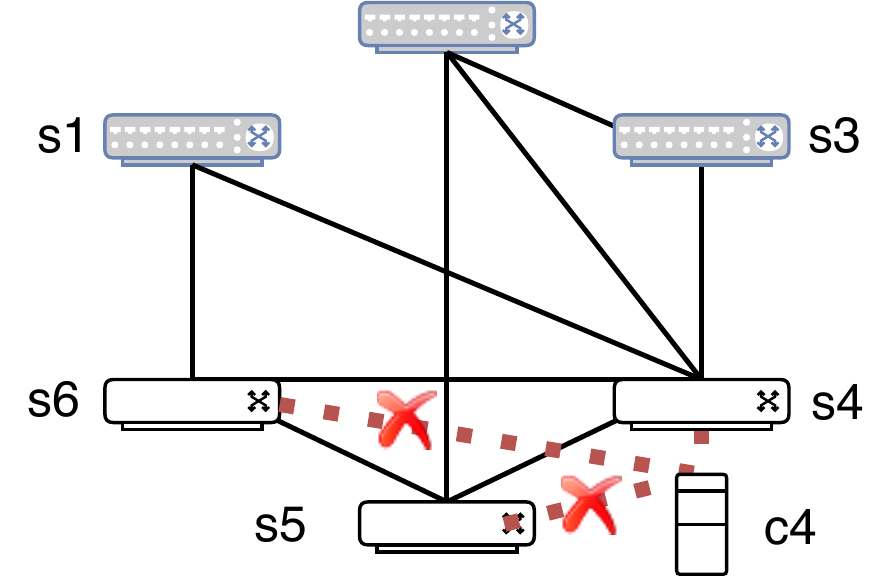}
\label{fig:3s1c}
}
\subfigure[With blue switch selection, deploying three controllers provides much more controller ability than needed]{
\includegraphics[width=1.4in]{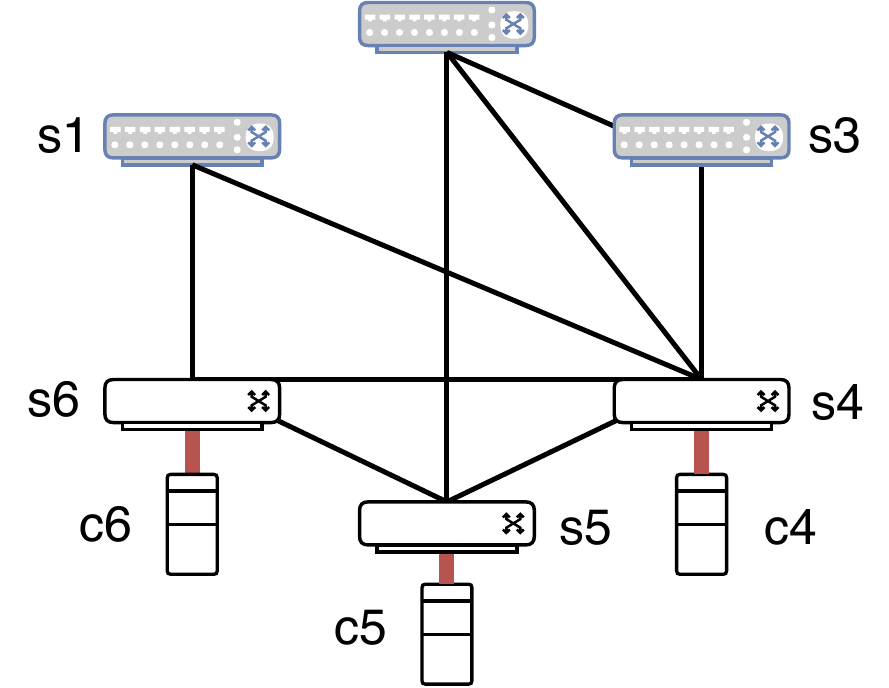}
\label{fig:3s3c}
}
\subfigure[The impact of the controller processing ability on the switch-controller mapping of blue switch selection]{
\includegraphics[width=1.4in]{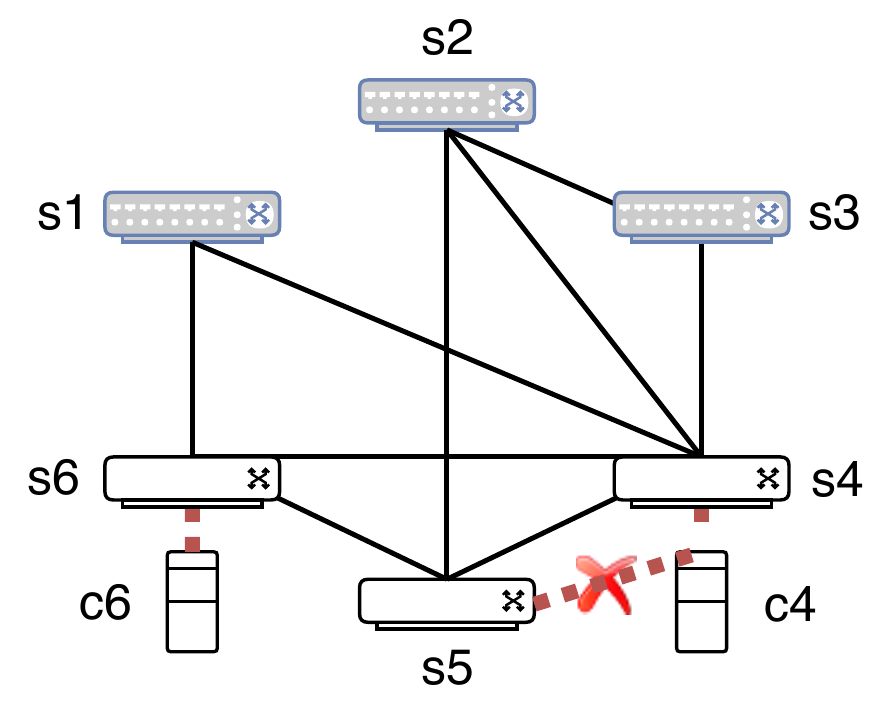}
\label{fig:3s2c_mismapping}
}
\subfigure[A feasible solution of blue switch selection that maximizes the number of programmable flows and minimizes the propagation delay of flow requests with a few controllers]{ 
\includegraphics[width=1.4in]{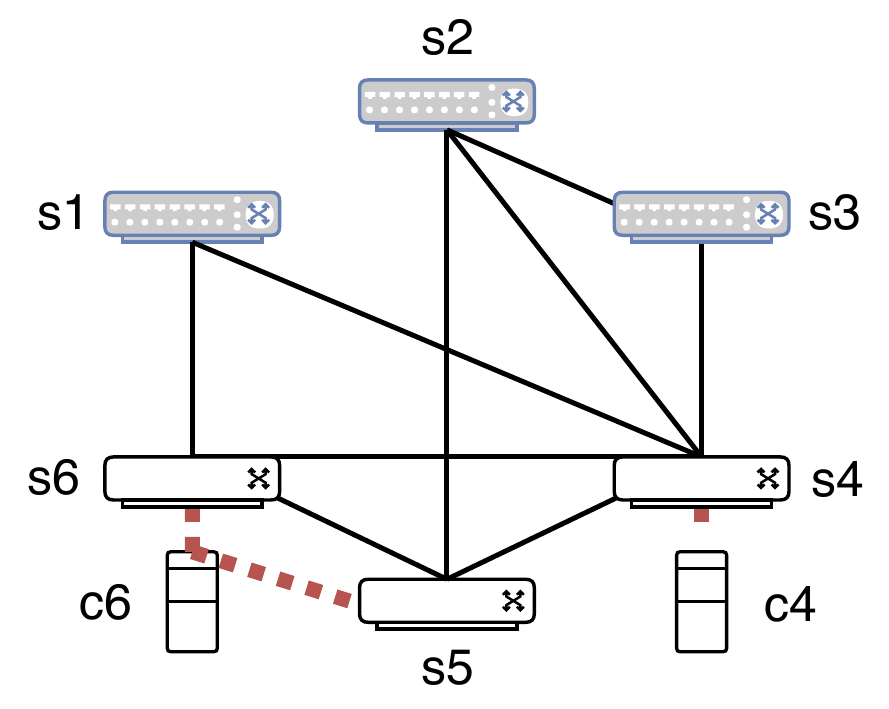}
\label{fig:3s2c_low}
}
\subfigure[A feasible solution of green switch selection that has a higher propagation delay of flow requests than the one with the solution in \ref{fig:3s2c_low}]{
\includegraphics[width=1.4in]{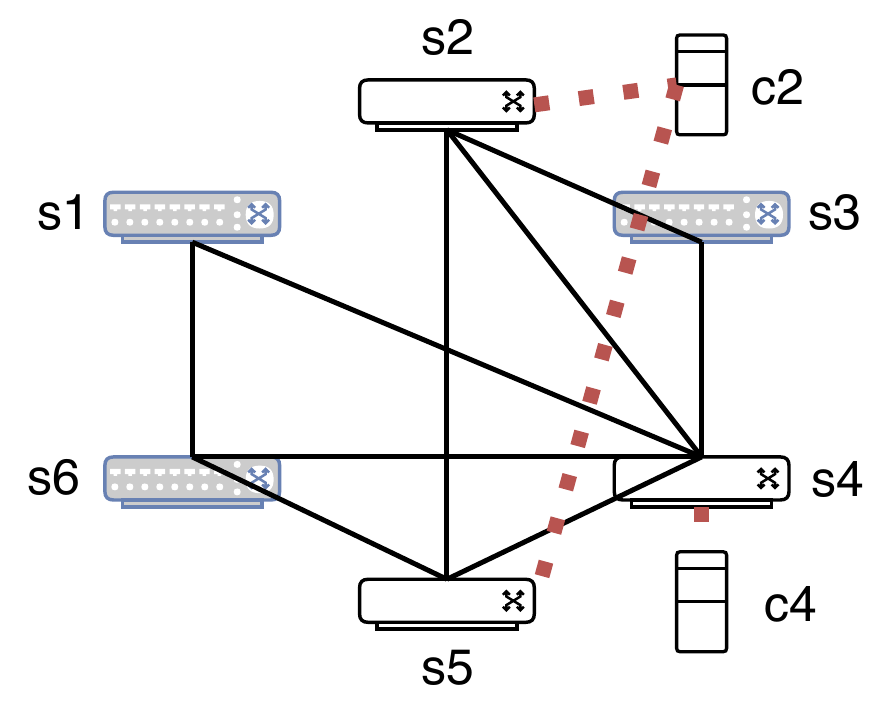}
\label{fig:3s2c_high}
}
\subfigure{
\includegraphics[width=1.2in]{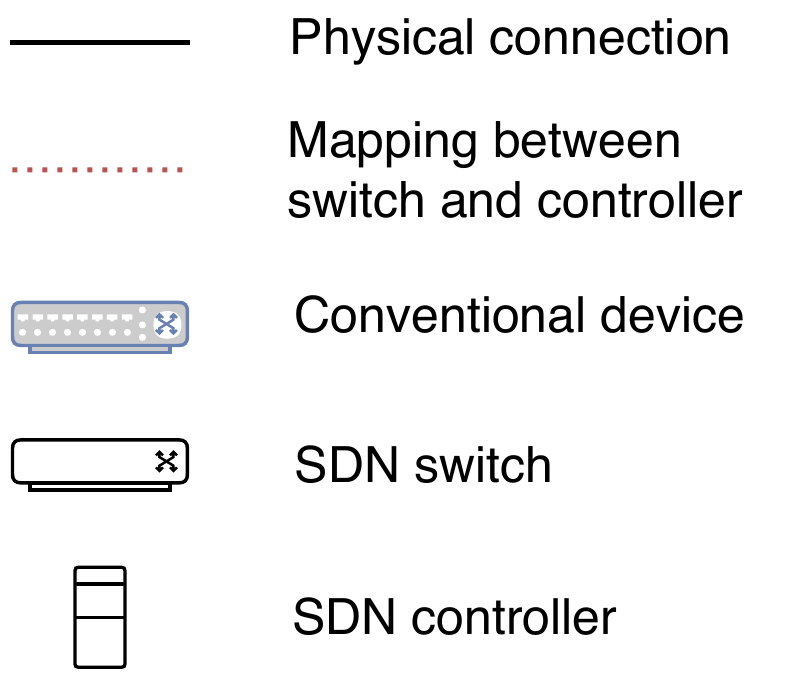}
\label{fig:legend}
}
\caption{A motivation example. }
\label{fig:motivation}
\vspace{-0.4cm}
\end{figure*}

\section{An Example and Motivation}
\label{motivation}
In this section, we use a motivation example in Fig. \ref{fig:motivation} to show how the network upgrade affects the performance of the hybrid SDN.

\subsection{Example background}
Fig. \ref{fig:network} shows the network composition: the network has 6 legacy switches $s_1$ to $s_6$ deployed at six different locations with uneven interconnection. We assume that the number of flows in each switch is proportional to the number of its links, and the total number of flows on a link is normalized as 1. The number of normalized flows at the six switches are: $f_{s1} = 2, f_{s2}= 3, f_{s3} = 2, f_{s4} = 5, f_{s5}= 3, f_{s6} = 3$. The normalized processing ability of each controller is $p_c = 6$. The upgrade budget includes switch upgrade cost and controller deployment cost. In this example, the upgrade budget allows upgrading at most four legacy devices to SDN switches, and the cost of an SDN switch is three times of a controller. In other words, if we upgrade one less switch, we can deploy three more controllers at three locations\footnote{At a location, a controller is physically implemented by a controller instance cluster to prevent single-node-of-failure \cite{onos}\cite{odl}. We can either use two instances or three instances for a cluster. However, the odd number accelerates the primary controller selection in the case of the controller failure, and the production networks usually use three instances \cite{ODLcluster}\cite{onoscluster}. In the rest of the paper, we use a controller to represent a controller cluster at a location since a cluster usually uses one controller to process requests, and the other two are just backup and thus do not process any requests.}. We assume that one SDN switch can only be controlled by exactly one controller, and one controller can potentially control multiple SDN switches. Our problem is to find a feasible solution that contains a set of upgraded switches, a set of controllers, and the mappings between the upgraded switches and controllers to maximize the number of controlled programmable flows and at the same time minimize the delay between the controllers and upgraded switches under the constraints of the upgrade budget and individual controller's processing ability. 

\subsection{Impact of the network upgrade on the hybrid SDN}
\subsubsection{Maximizing the benefit of SDN by selecting upgraded SDN switches}
If a flow traverses one SDN switch, it is a programmable flow and we can enjoy the benefit of SDN by flexibly controlling the flow's forwarding path. Thus, the benefit of SDN depends on the number of programmable flows. We use $A_s$ to denote the total number of programmable flows from the upgraded SDN switches. Our first goal is to maximize $A_s$ with a given switch upgrade budget. There are 20 switch combinations for selecting three switches from six switches to upgrade. Fig. \ref{fig:sw_select} shows three switch selection combinations that control the largest number of flows $A_s = 3+3+5 = 11$: ${red} = \{s_2, s_4, s_6\}$, ${blue} = \{s_4, s_5, s_6\}$, $green = \{s_2, s_4, s_5\}$.

\subsubsection{Guaranteeing the processing performance of flow requests by considering the processing ability of controllers}
We use $blue$ switch selection in the following explanation. The $red$ and $green$ switch selections follow the similar explanation. We use $A_c$ to denote the total processing ability of the deployed controllers. Existing works show when a controller processes more requests than its normal processing load, the processing delay of the requests could be five times longer than the one under the normal load \cite{xie2018cutting} \cite{ksentini2016using}. To maintain the processing performance of flow requests, after a network upgrade, for the entire network, it is necessary to require $A_s \leq A_c$, and for controllers, each one should not process the number of flow requests larger than its normal processing ability. In Fig. \ref{fig:3s1c}, deploying one controller can either control the flows at switch $s_4$ or the flows at the other two switches, and thus deploying one controller is not enough. In Fig. \ref{fig:3s3c}, deploying one controller for each switch can satisfy the control requirement, but the controllers' processing ability of such a deployment is much larger than the number of flows from SDN switches. The extra number of controllers could also bring extra overhead. If we deploy many controllers, the number of switches controlled by a controller would be small, and a controller's control ability on the network would be reduced. Thus, after a network upgrade, we prefer to deploy a few controllers to satisfy the demand of SDN switches. In this example, we only need two controllers since $A_s < A_c = p_c * 2 = 12, f_{s_4} < p_c$, and the load of the other two switches in $red$, $green$, or $blue$ switch selection is less than $p_c$. Thus, a good solution is to upgrade three switches with the number of flows 5, 3, and 3 and deploy only two controllers. 

\subsubsection{Maintaining the good propagation performance of flow requests by considering the locations of SDN switches and controllers}
An SDN switch and a controller interact with each other in two ways: one SDN switch can send a flow request to the controller when it does not know how to process the flow, and a controller can dynamically change the paths of flows on its controlled SDN switches to improve the network performance (e.g., when identifying a congestion). Previous works \cite{heller2012controller}\cite{yao2014capacitated} show that the propagation latency in WANs is the dominant factor among all latencies because the propagation latency bounds the control reactions of a controller that can be executed at a reasonable speed. A long propagation latency could limit availability (e.g., link failure recovery) and convergence time (e.g., network state pulling, routing convergence). Thus, for WANs, an important factor for controller placement is to minimize the total propagation delay among SDN switches and controllers. In WANs, the propagation delay of a request is proportional to the distance between a sender and a receiver. To maintain the good propagation performance of flow requests between the SDN's control plane and data plane, we should minimize the propagation delay of flow requests between SDN switches and controllers, which motivates us to deploy controllers near the upgraded SDN switches.

In the example of $blue$ switch selection, the number of flows in switch $s_4$ is the largest one, and it is very close to the control capacity of a controller. Hence, we select switch $s_4$ to upgrade and deploy controller $c_4$ at location 4 for the switch. The rest two switches are switches $s_5$ and $s_6$, which can be controlled by a controller deployed at location 5 or 6. We use $M_{s_i,c_j}$ to denote the mapping between switch $s_i$ and controller $c_j$. Since the delay between a switch and a controller is minimized when the controller is deployed at the same place of the switch, we have two candidate solutions: $blue_1=\{s_4,s_5,s_6,c_4,c_5,M_{s_4c_4}\}$, $blue_2=\{s_4,s_5,s_6,c_4,c_6,M_{s_4c_4}\}$. 

\subsubsection{Impact of individual controller's processing ability on controller-switch mappings} 
\label{mapping_example}
After selecting upgraded switches and deployed controllers, we should also map each upgraded switch to a controller for control. Each SDN switch can be managed by only one controller, and one controller could possibly manage multiple switches. Fig. \ref{fig:3s2c_mismapping} and \ref{fig:3s2c_low} show two switch-controller mappings of the solution $blue_2$. In the two subfigures, switch $s_4$ is mapped to controller $c_4$, and switch $s_6$ is mapped to controller $c_6$. However, switch $s_5$ has different mappings. In Fig. \ref{fig:3s2c_mismapping}, switch $s_5$ is mapped to controller $c_4$, and the controller's load reaches $3+5 = 8$, which exceeds its maximum processing ability $6$ while the load of controller $c_6$ is only 3. Thus, the mapping between switch $s_5$ and controller $c_4$ is not allowed given that $s_4$ is already mapped to $c_4$. In Fig. \ref{fig:3s2c_low}, switch $s_5$'s mapping is feasible since its mapping to $c_6$ keeps $c_6$'s load at $3+3 = 6$, which does not exceed its maximum processing ability. Thus, switch-controller mappings without considering the controller processing ability constraint could also lead to an infeasible solution.

Fig. \ref{fig:3s2c_high} shows a solution of $green$ switch selection. We use $D_{s_i,s_j}$ to denote the distance (e.g., the length of the direct link or the shortest path) between switches $s_i$ and $s_j$. Because of $D_{s_2,s_5} = 2*D_{s_6,s_5}$, we have $D_{s_5,c_2} = 2*D_{s_6,s_5}$. One can check that optimal solutions of $blue_1$ and $blue_2$ switch selections outperform the optimal solution of $green$ switch selection in terms of the total propagation delay.

\subsection{Design criteria of the network upgrade}
In the above network upgrade, we consider the following factors to upgrade a legacy network to the hybrid SDN:
\begin{enumerate}

\item{maximizing the number of programmable flows from the upgraded SDN switches},
\item{guaranteeing the processing performance of flow requests at controllers}, and
\item{minimizing the total propagation delay of flow requests between controllers and SDN switches}.
\end{enumerate}

The first factor depends on the upgraded switches. The second factor relates to the processing ability of individual controllers. The third factor must take the locations and mappings of controllers and SDN switches into consideration. By considering all these factors, an optimal result would come from a very complicated procedure since the factors couple with each other. For example, to get a better result, mapping selection in Section \ref{mapping_example} could be determined before or simultaneously when selecting switches to upgrade and controllers to deploy. In the next section, we will formulate an optimization problem that takes into account all the above factors for an efficient network upgrade.

\section{Problem formulation}
\label{problem}
In this section, we first introduce constraints and objective functions and then formulate our optimization problem. 

\subsection{System description}
In a network upgrade from a legacy network to a hybrid SDN, we upgrade some of its switches to SDN switches and deploy some controllers to manage all SDN switches. The network consists of $N$ switches deployed at $N$ locations. The number of flows in switch $s_i$ is $R_i \geq 0~(1 \leq i \leq N)$. We use homogeneous controllers, and the processing ability of controller $c_j$ is $A > 0$ $(1 \leq j \leq N)$.

We use $x_{i} = 1$ to denote that switch $s_i$ is upgraded to the SDN switch; otherwise $x_{i} = 0$. Similarly, we use $y_{j} = 1$ to denote that controller $c_j$ is deployed at location $j$ and otherwise $y_{j} = 0$; we use $z_{ij} = 1$ to denote that an SDN switch $s_i$ is mapped to controller $c_j$ and otherwise $z_{ij} = 0$. The default relationships between $x_{i}$, $y_{j}$, and $z_{ij}$ are shown below:
\begin{equation}
\label{switch}
z_{ij}  \leq x_{i},~\forall~i,~\forall~j,
\end{equation}
\begin{equation}
\label{controller}
z_{ij}  \leq y_{j},~\forall~i,~\forall~j.
\end{equation}
Equations \eqref{switch} and \eqref{controller} imply that a feasible switch-controller mapping must satisfy two conditions: a switch is upgraded to an SDN switch, and its mapping controller is deployed.

\subsection{Constraints}
\subsubsection{Controller-switch mapping constraint} 
If switch $s_i$ is upgraded, it must be controlled by only one controller; if switch $s_i$ is not upgraded, it is not controlled by any controller. Thus, we have  
 \begin{equation}
x_{i} = \sum_{j= 1}^{N} z_{ij} ,~\forall~i.
\label{con1}
\end{equation}
If controller $c_j$ is not deployed, it does not control any switches; if controller $c_j$ is deployed, it must control at least one SDN switch. That is:
\begin{equation}
 y_j \leq \sum_{i = 1}^{N} z_{ij},~\forall~j.
 \label{con2}
\end{equation}

\subsubsection{Individual controller's processing ability constraints}
An SDN switch will send a flow request to a controller when it does not know how to process a new flow. The number of flow requests from an SDN switch equals the number of flows traversing the switch, and it should not exceed the controller's processing ability. This can be mathematically written as
\begin{equation*}
\sum_{i=1}^{N}(R_i * x_i* z_{ij}) \leq A,~\forall~j,
\label{old_con3}
\end{equation*}
Since both $x_i$ and $z_{ij}$ are binary variables, we substitute \eqref{switch} into the above nonlinear constraints and reformulate them as the following linear constraints: 
\begin{equation}
\sum_{i=1}^{N}(R_i * z_{ij}) \leq A,~\forall~j.
\label{con3}
\end{equation}
\subsubsection{Upgrade budget constraint} 
The upgrade budget is that the total cost of upgraded switches and deployed controllers is at most $M > 0$. The number of upgraded switches is $\sum_{i = 1}^{N} x_{i} $, and the number of deployed controllers is $\sum_{j = 1}^{N} y_{j} $. Suppose that the cost of an SDN switch is  $\gamma~(\gamma \geq 1)$ times of the cost of a controller. Thus, we have
\begin{equation}
\gamma * \sum_{i = 1}^{N} x_{i} + \sum_{j = 1}^{N} y_{j}  \leq M.
\label{con4}
\end{equation}

\subsection{Objective functions}
There are two objectives in our problem. The first one is to maximize the total number of flows by updating legacy switches to SDN switches:
\begin{equation}
obj_1 = \sum_{i=1}^{N}(R_i * x_{i}).
\end{equation}
The second one is to minimize the total propagation delay of flow requests between SDN switches and controllers by deploying the controllers and mapping the SDN switches to the controllers. We use $D_{ij}~(D_{ij} \geq 0)$ to denote the propagation delay between SDN switch $s_i$ and controller $c_j$, which is proportional to their distance. Thus, we formulate the total propagation delay as follows:
\begin{equation}
obj_2 =  \sum_{i = 1}^{N}\sum_{j= 1}^{N} (D_{ij} * z_{ij}).
\end{equation} 

One main reason for upgrading traditional switches to SDN switches is to enjoy the benefit of programmability to flexibly select the route path. Many existing works adopt the first objective as their objectives \cite{poularakis2017one}\cite{jia2016incremental}. The first objective decides the number and location of SDN switches in a network. The second objective is typically used to optimize the network performance for a network with given SDN switches \cite{heller2012controller}\cite{yao2014capacitated}. In a hybrid SDN, the first objective is usually (much) more important than the second objective.

\subsection{Problem formulation}
The goal of our problem is to maximize the number of flows from the SDN switches and minimize the total propagation delay of flow requests between the SDN switches and controllers by smartly upgrading legacy switches to SDN switches, deploying controllers, and mapping the SDN switches to the controllers. In practice, maximizing the number of programmable flows in our objectives has the first priority, and minimizing the total delay between the SDN switches and controllers has the second priority. Hence, we model the Joint Switch Upgrade and Controller Deployment (JSUCD) problem as a two-stage problem. In the first stage, we maximize the total number of programmable flows without considering the delay objective. This can be done by solving the following problem:

\begin{equation}
\tag{P1}
\label{stage1-a}
\begin{aligned}
& &  F^*= \underset{x,y,z}{\text{max}} \ \ &  \sum_{i=1}^{N} R_{i} x_i  & \\
&  & \text{s.t.}\  \ \ & (\ref{controller})(\ref{con1})(\ref{con2})(\ref{con3})(\ref{con4}),\\
& & & x_i, \, y_j, \, z_{ij} \in \{0,1\},~\forall~i,~\forall~j.      &                 
\end{aligned}
\end{equation}
With the maximum number of programmable flows obtained from problem \eqref{stage1-a}, we minimize the total delay by solving the following problem: 
\begin{equation}
\tag{P2}
\label{stage2-a}
\begin{aligned}
&  & \underset{x,y,z}{\text{min}} \ \ &  \sum_{i=1}^{N} \sum_{j=1}^{N} D_{ij} z_{ij}  & \\
& & \text{s.t.}\ \ \ &  F^* =\sum_{i=1}^{N} R_{i} x_i ,\  (\ref{controller})(\ref{con1})(\ref{con2})(\ref{con3})(\ref{con4}), & \\
& & & x_i, \, y_j, \, z_{ij} \in \{0,1\},~\forall~i,~\forall~j.      &                       
\end{aligned}
\end{equation}

The above two-stage problem formulation needs to solve two problems. An alternative formulation is to combine the two objectives into one objective as follows:
\begin{equation}
\tag{P}
\label{newproblem-a}
\begin{aligned}
& \underset{x,y,z}{\text{max}}
& &  \sum_{i=1}^{N} R_{i} x_i -  \lambda \sum_{i=1}^{N} \sum_{j=1}^{N} D_{ij} z_{ij} & \\
& \text{s.t.} &   & (\ref{controller})(\ref{con1})(\ref{con2})(\ref{con3})(\ref{con4}),\\
& & & x_i, \, y_j, \, z_{ij} \in \{0,1\},~\forall~i,~\forall~j,  &               
\end{aligned}
\end{equation}
where $ \lambda \geq 0 $ is a constant number that gives different weights of the two objective terms. In Section \ref{analysis}, we prove that by choosing the parameter $ \lambda $ appropriately problem \eqref{newproblem-a} is equivalent to the two-stage problem.

\section{Problem analysis}
\label{analysis}
In this section, we prove that under a certain condition, problem \eqref{newproblem-a} is equivalent to the two-stage problem, and analyze the computational complexity of problems \eqref{stage1-a}, \eqref{stage2-a}, and \eqref{newproblem-a}. 

\subsection{Choice of the parameter $\lambda$ in problem \eqref{newproblem-a}} 
Notice that if $D_{ij} = 0$ for all $ i $ and $ j $, problem \eqref{newproblem-a} is equivalent to problem \eqref{stage1-a} for any $\lambda$. In this case, the objective function in problem \eqref{stage2-a} is zero and hence problem \eqref{newproblem-a} is equivalent to the two-stage problem. In the following, we show that even though there are $ i_0 $ and $ j_0 $ such that $ D_{i_0j_0} \neq 0$, by appropriately choosing the parameter $\lambda$, this equivalence still holds. 
\begin{mypro}
	\label{oneproblemcondition}
	Suppose all  $ R_i $ ($ 1 \leq i \leq N $) are integers and there are $ i_0$ and  $j_0$ such that $D_{i_0j_0} \neq 0$. Let $ d $ be the greatest common divisor of $ R_1,\ldots,R_N $, i.e., $ d = \text{max}\{ \bar{d} \ | \ \frac{R_i}{\bar{d}} \in \mathbb{Z}, \ i=1,\ldots,N \} $. If  
	\begin{equation}
	\label{equivalentcondition}
	0 < \lambda < \frac{d}{\sum_{i=1}^{N}\text{max}_{1\leq j\leq N}\{ D_{ij} \}},
	\end{equation}
	then problem \eqref{newproblem-a} is equivalent to the two-stage problem. 
\end{mypro}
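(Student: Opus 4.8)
The plan is to show that the set of optimal solutions of problem \eqref{newproblem-a} coincides with the set of optimal solutions of the two-stage problem \eqref{stage1-a}--\eqref{stage2-a}. First I would record two preliminary facts. (i) The common feasible region of all three problems is nonempty (for instance $x=y=z=0$ satisfies \eqref{controller} and \eqref{con1}--\eqref{con4}, using $A>0$ and $M>0$) and finite, so each problem attains its optimum; write $F^*$ for the optimal value of \eqref{stage1-a}. (ii) From \eqref{con1} and $x_i\in\{0,1\}$ we get $\sum_{j=1}^N z_{ij}=x_i\le 1$ for every $i$, hence for any feasible point $0\le \sum_{i=1}^N\sum_{j=1}^N D_{ij}z_{ij}\le S:=\sum_{i=1}^N\max_{1\le j\le N}\{D_{ij}\}$, and $S>0$ since $D_{i_0j_0}>0$; this is exactly the quantity appearing in the denominator of \eqref{equivalentcondition}.

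The central step is a number-theoretic "gap" argument. Since every $R_i$ is an integer divisible by $d$, for any feasible $(x,y,z)$ the value $\sum_{i=1}^N R_i x_i$ is a nonnegative integer multiple of $d$. Therefore, if a feasible point is \emph{not} optimal for \eqref{stage1-a}, then necessarily $\sum_{i=1}^N R_i x_i\le F^*-d$. I would then use this to pin down the first objective along any optimizer of \eqref{newproblem-a}: let $(\hat x,\hat y,\hat z)$ be optimal for \eqref{newproblem-a}, and suppose toward a contradiction that $\sum_i R_i\hat x_i<F^*$. Then, using the delay term is $\ge 0$, the objective value of \eqref{newproblem-a} at $(\hat x,\hat y,\hat z)$ is at most $F^*-d$. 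On the other hand, taking any optimal solution $(x^*,y^*,z^*)$ of \eqref{stage2-a} (which has $\sum_i R_i x^*_i=F^*$), its objective value in \eqref{newproblem-a} is $F^*-\lambda\sum_{i,j}D_{ij}z^*_{ij}\ge F^*-\lambda S>F^*-d$ by \eqref{equivalentcondition}, contradicting optimality of $(\hat x,\hat y,\hat z)$. Hence every optimizer of \eqref{newproblem-a} satisfies $\sum_i R_i x_i=F^*$, i.e.\ is feasible for \eqref{stage2-a}.

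To finish, I would note that on the face $\{\,(x,y,z)\ \text{feasible}:\ \sum_i R_i x_i=F^*\,\}$ the objective of \eqref{newproblem-a} is $F^*-\lambda\sum_{i,j}D_{ij}z_{ij}$, so (as $\lambda>0$) maximizing it over this face is precisely minimizing $\sum_{i,j}D_{ij}z_{ij}$, i.e.\ solving \eqref{stage2-a}. Combined with the previous paragraph, the optimizers of \eqref{newproblem-a} are exactly the optimizers of \eqref{stage2-a}; the reverse inclusion also follows from the same estimates (a point with first objective below $F^*$ scores at most $F^*-d<F^*-\lambda S\le F^*-\lambda\,(\min\text{ delay})$, and a point on the face scores at most $F^*-\lambda\,(\min\text{ delay})$). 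This is the asserted equivalence.

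The step I expect to be the main obstacle — really the only non-bookkeeping point — is the $d$-gap argument: one must argue that the continuous perturbation introduced by the delay term can never let a first-stage-suboptimal feasible point overtake a first-stage-optimal one in \eqref{newproblem-a}. This is exactly what the bound in \eqref{equivalentcondition} guarantees, by pitting the integrality gap $d$ (a lower bound on the difference of first-stage objective values) against $\lambda S$ (an upper bound on the total delay perturbation). The remaining items — nonemptiness and finiteness of the feasible set, the bound $\sum_j z_{ij}\le 1$, and the reduction of \eqref{newproblem-a} on the optimal face to \eqref{stage2-a} — are routine.
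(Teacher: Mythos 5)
Your proposal is correct and follows essentially the same route as the paper's proof: both hinge on the observation that $\sum_i R_i x_i$ is always a multiple of $d$ (so any first-stage suboptimality costs at least $d$), while the delay perturbation is bounded by $\lambda\sum_i\max_j\{D_{ij}\}<d$ because $\sum_j z_{ij}=x_i\le 1$, yielding the same contradiction. The only cosmetic difference is that you compare the candidate optimizer of \eqref{newproblem-a} against an optimizer of \eqref{stage2-a} rather than of \eqref{stage1-a}, and you spell out the restriction-to-the-optimal-face step slightly more explicitly.
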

\noindent{\textbf{Proof:}} 
Notice that $ \frac{d}{\sum_{i=1}^N\max_{1\leq j\leq N} \{ D_{ij} \}} >0 $ since $ D_{ij} \geq 0$ for all $ i $ and $ j $ and there exist $ i_0$ and  $j_0$ such $ D_{i_0j_0} > 0 $. In the following, we shall use the contradiction argument to show that the solution of problem \eqref{newproblem-a} is also the solution of problems \eqref{stage1-a} and \eqref{stage2-a}. 

Let $ (x^*, y^*, z^*) $ be an optimal solution of problem \eqref{newproblem-a}. Assume that $ (x^*, y^*, z^*) $ is not an optimal solution of problem \eqref{stage1-a}. Then there exists an optimal solution $ (\bar{x}, \bar{y},\bar{z}) $ of problem \eqref{stage1-a} with $  \sum_{i=1}^{N} R_{i} \bar{x}_i > \sum_{i=1}^{N} R_{i} x^*_i $. Because $ R_i$, $\bar{x}_i$, and $x^*_i $ are all integers, by the choice of $ d $,  we have
\begin{equation}
\label{T1}
\sum_{i=1}^{N} R_{i} \bar{x}_i - \sum_{i=1}^{N} R_{i} x^*_i\geq  d.
\end{equation} 
Since $ \bar{x}_i = \sum_{j=1}^N \bar{z}_{ij} \leq 1  $, it follows $ \sum_{j=1}^{N} D_{ij} \bar{z}_{ij} \leq \text{max}_{1\leq j \leq N}\{D_{ij}\}\bar{z}_{ij} $, which, together with the choice of $\lambda$ in \eqref{equivalentcondition}, further implies that
\begin{equation}
\label{T2}
\begin{aligned}
\lambda  \sum_{i=1}^{N} \sum_{j=1}^{N} D_{ij} \bar{z}_{ij} \leq \lambda  \sum_{i=1}^{N}\text{max}_{1 \leq j \leq N}\{ D_{ij}\}\bar{z}_{ij} & \leq  \lambda  \sum_{i=1}^{N}\text{max}_{1 \leq j \leq N}\{ D_{ij}\} < d.&\\
\end{aligned}
\end{equation}
Combining \eqref{T1} with \eqref{T2}, we obtain 
\begin{equation*}
\begin{aligned}
& \sum_{i=1}^{N} R_{i} \bar{x}_i - \lambda \sum_{i=1}^{N} \sum_{j=1}^{N} D_{ij} \bar{z}_{ij} > \sum_{i=1}^{N} R_{i} x^*_i  + d - d 
&= \sum_{i=1}^{N} R_{i} x^*_i \geq  \sum_{i=1}^{N} R_{i} x^*_i - \lambda \sum_{i=1}^{N} \sum_{j=1}^{N} D_{ij} z^*_{ij} , 
\end{aligned}
\end{equation*}
where the last inequality follows from $ D_{ij} \geq 0 $ for all $ i $ and $ j $ and $ \lambda > 0 $. 
However, this contradicts the fact that $ (x^*, y^*, z^*) $ is an optimal solution of  problem \eqref{newproblem-a}. Hence $ (x^*, y^*, z^*) $ is also optimal for problem \eqref{stage1-a}.  Let $ F^* $ be the optimal value of problem \eqref{stage1-a}. Since adding the constraint $ F^* = \sum_{i=1}^m R_i x_i   $
to problem \eqref{newproblem-a} does not change its optimal solution set, then it follows that the optimal solution of problem \eqref{newproblem-a} is also optimal for problem \eqref{stage2-a}. Similarly, one can show that the optimal solution of problem \eqref{stage2-a} is also optimal for problem \eqref{newproblem-a}.
This completes our proof. \qedd

Proposition \ref{oneproblemcondition} shows that one-stage problem \eqref{newproblem-a} with $\lambda$ satisfying the condition in \eqref{equivalentcondition} is equivalent to the two-stage problem, and hence we can solve problem \eqref{newproblem-a} to obtain the desired result instead of solving two problems \eqref{stage1-a} and \eqref{stage2-a}.

\subsection{Complexity analysis}
\label{sec:np}
In this subsection, we analyze the computational complexity of problems \eqref{stage1-a}, \eqref{stage2-a}, and \eqref{newproblem-a}.
\begin{mypro}
	\label{nphard}
	Problems \eqref{stage1-a}, \eqref{stage2-a}, and \eqref{newproblem-a} are all strongly NP-hard.
\end{mypro}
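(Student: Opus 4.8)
\noindent\emph{Proof plan.}
The plan is to establish strong NP-hardness by polynomial-time reductions from \textsc{3-Partition}, which is strongly NP-complete: given $3m$ positive integers $a_1,\dots,a_{3m}$ with $\sum_k a_k = mB$ and $B/4 < a_k < B/2$ for every $k$, decide whether $\{1,\dots,3m\}$ can be split into $m$ triples each summing to exactly $B$. The size restriction $B/4 < a_k < B/2$ is the gadget that drives everything: a subset of the $a_k$'s sums to exactly $B$ only if it has exactly three elements. The capacity constraint \eqref{con3} is the natural place this packing structure enters our problem, which is why \textsc{3-Partition} (rather than, say, \textsc{Partition}) is the right source problem for a \emph{strong} hardness result.

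For problem \eqref{stage1-a} I would build, from a \textsc{3-Partition} instance, an instance with $N = 3m$ switches, $R_i = a_i$, homogeneous controller ability $A = B$, cost ratio $\gamma = 1$, and budget $M = 4m$, and show that the optimal value of \eqref{stage1-a} equals $mB$ if and only if the \textsc{3-Partition} instance is a yes-instance. For the ``if'' direction: from a valid partition, upgrade all $3m$ switches, deploy $m$ controllers, and map the three switches of each triple to one controller; then \eqref{con3} holds with equality ($A=B$) and \eqref{con4} holds since $\gamma\cdot 3m + m = M$, so the objective attains $\sum_i R_i = mB$. For the ``only if'' direction: since $\sum_i R_i x_i \le \sum_i R_i = mB$, attaining $mB$ forces $x_i = 1$ for all $i$; then \eqref{con4} permits at most $m$ deployed controllers, \eqref{con1} and \eqref{controller} make the $3m$ switches a partition over the deployed controllers, and \eqref{con3} together with total load $mB$ forces exactly $m$ controllers each loaded to exactly $B$, hence exactly three switches per controller by the gadget, i.e.\ a valid \textsc{3-Partition}. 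All produced numbers are polynomially bounded in the input size, so \eqref{stage1-a} is strongly NP-hard.

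Problem \eqref{newproblem-a} then follows immediately: setting $D_{ij}=0$ for all $i,j$ annihilates the $\lambda$-term and turns \eqref{newproblem-a} into exactly \eqref{stage1-a}, so the same family of instances works for every $\lambda\ge 0$. For \eqref{stage2-a} the total flow is already pinned to $F^*$, so the hardness must be carried by the delay term; I would augment the construction with $m$ ``depot'' locations: take $N=4m$, with $R_i=a_i$ for $i\le 3m$, $R_i=0$ for $3m<i\le 4m$, keep $A=B$ and $\gamma=1$, and set $M=5m$ --- large enough that upgrading all $3m$ item switches and packing them two-per-controller is feasible, so that $F^*=mB$ is known \emph{without} solving \textsc{3-Partition}, while the constraint $\sum_i R_i x_i = F^*$ forces every item switch to be upgraded. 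Put $D_{ij}=0$ whenever $j$ is a depot location and $D_{ij}=1$ whenever $j\le 3m$. Then the optimal value of \eqref{stage2-a} is $0$ iff the $3m$ item switches can all be mapped onto the $m$ depot controllers within capacity $B$, which by the same counting-and-size argument holds iff the \textsc{3-Partition} instance is a yes-instance (otherwise the optimum is at least $1$). Hence \eqref{stage2-a} is strongly NP-hard as well.

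The step I expect to be the main obstacle is the \eqref{stage2-a} reduction: the auxiliary gadget (the depot locations, the $0/1$ delays, and the budget $M$) must be designed so that simultaneously (i) $F^*$ is explicitly computable from the reduction, so that the produced object is genuinely an instance of \eqref{stage2-a}; (ii) \eqref{con3} still enforces the triple structure; and (iii) every produced number stays polynomially bounded, which is what upgrades the conclusion from weak to \emph{strong} NP-hardness. The remaining work --- verifying that the exhibited solutions satisfy \eqref{switch}, \eqref{controller}, and \eqref{con1}--\eqref{con4}, and that in the no-instance case no assignment achieves delay $0$ --- is routine.
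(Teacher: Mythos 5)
Your proposal is correct and follows essentially the same route as the paper: a polynomial reduction from 3-Partition to \eqref{stage1-a} with $A=B$, $\gamma=1$, $M=4m$, $R_i=a_i$, and the same ``optimal value equals $mB$ iff yes-instance'' equivalence driven by the $B/4<a_k<B/2$ size gadget and the budget/capacity counting argument. The only difference is that the paper dismisses \eqref{stage2-a} and \eqref{newproblem-a} with ``the proof of the other two problems is similar,'' whereas you supply the missing details --- the $D_{ij}\equiv 0$ observation for \eqref{newproblem-a} and the depot gadget that makes $F^*$ computable a priori for \eqref{stage2-a} --- both of which check out.
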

\noindent\textbf{Proof: }	
We only prove the strong NP-hardness of problem \eqref{stage1-a} since the proof of the other two problems is similar. This can be done by establishing a polynomial-time reduction from the 3-partition problem, which is strongly NP-complete \cite{Garey:1979:CIG:578533}\cite{garey1978strong}. Next, we first introduce the 3-partition problem. 

Given a finite set $ \cal S $ of $ 3m $ elements, a bound $ B \in \mathbb{Z}_+ $, and a size $ a_i \in \mathbb{Z}_+ $ for the $ i $-th element with $ \frac{B}{4} < a_i < \frac{B}{2} $ and $ \sum_{i=1}^{3m} a_i = mB  $, can $ \cal S $ be partitioned into $ m $ disjoint sets $ {\cal S}_1,\ldots,{\cal S}_m $ such that $ \sum_{i \in {\cal S}_j} a_i = B $, $ 1\leq j \leq m $?

Given any instance of the 3-partition problem, we construct an instance of problem \eqref{stage1-a} as follows:  
\begin{itemize}
	\item[]
	set $ M = 4m $, $ A=B $, $ \gamma = 1 $, and $ N $ to be any integer
	satisfying  $ N \geq 3m  $; 
	\item[]
	set  $ R_i = a_i $ for $ i \in \{1,\ldots,3m\} $ and $ R_i = 0 $ for $ i \in \{ 3m+1,\ldots,N  \} $.
\end{itemize}
By construction, the objective function in problem \eqref{stage1-a} reduces to 
$\sum_{i=1}^{3m}  a_i x_i$. It is easy to see that, for this constructed instance of problem \eqref{stage1-a}, there exists an optimal solution $ (x,y,z) $ such that $ x_i = 0 $ for $ i \in \{ 3m+1,\ldots,N \} $ and $ z_{ij}= 0 $ for 
$ i \in \{ 3m+1,\ldots,N \} $, $ j \in \{1,\ldots,N \} $. 

Hence the constructed instance of problem \eqref{stage1-a} can be written as 
\begin{equation}
\tag{P1'}
\label{stage1-3par}
\begin{aligned}
& \underset{x,y,z}{\text{max}}
& &  \sum_{i=1}^{3m} a_{i} x_i   \\
& \text{s.t.} &   &     \eqref{controller} \eqref{con1} \eqref{con2}, \\
& & & \sum_{i=1}^{3m} a_i z_{ij} \leq B, \ 1\leq j \leq N, \\
& & & \sum_{i=1}^{3m} x_i  + \sum_{j=1}^{N} y_j \leq 4m,  \\
& & & x_i, \, y_j, \, z_{ij} \in \{0,1\}, \  1\leq i \leq 3m, \ 1\leq j \leq N .                      
\end{aligned}
\end{equation}
In the following, we will show that the answer to the given instance of the 3-partition problem is true if and only if the optimal value of problem \eqref{stage1-3par} is $ mB $. 

Suppose that $ \cal S $ can be partitioned into $ m $ disjoint sets $ {\cal S}_1,\ldots,{\cal S}_m $ such that $ \sum_{i \in {\cal S}_j} a_i = B $ for all $ j\in \{1,\ldots,m\} $. We construct a point $ (\bar{x}, \bar{y}, \bar{z}) $ by setting 
\begin{enumerate}
	\item [] 
	$ \bar{x}_i = 1 $ for $ i \in \{1,\ldots,3m\} $;
	\item []
	$ \bar{y}_j = 1 $ for $ j \in \{1,\ldots,m\} $ and $ \bar{y}_j = 0 $ for $ j \in \{m+1,\ldots,N\} $;
	\item []
	$ \bar{z}_{ij}= 1$ for $ i \in {\cal S}_j $, $ j \in \{1,...,m\} $ and $ \bar{z}_{ij}= 0 $ for the others.
\end{enumerate}

Clearly, $ (\bar{x}, \bar{y}, \bar{z}) $ is a feasible solution of problem \eqref{stage1-3par} with $ \sum_{i=1}^{3m} a_i \bar{x}_i = mB $. As $ \sum_{i=1}^{3m} a_i \bar{x}_i \leq \sum_{i=1}^{3m} a_i = mB  $, we know the optimal value of problem \eqref{stage1-3par} is $ mB $.  

Now suppose that the optimal value of problem \eqref{stage1-3par} is $ mB $ and the corresponding solution is $ (\bar{x}, \bar{y}, \bar{z}) $. Then $ 	\sum_{i=1}^{3m} a_i \bar{x}_i = mB $. This, together with the fact that $ \sum_{i=1}^{3m} a_i = mB $, implies $ \sum_{i=1}^{3m} \bar{x}_i = 3m $.

Then it must follow 
\begin{equation}
\label{firstcondition}
\sum_{j=1}^{N} \bar{y}_j = m. 
\end{equation}
Otherwise, by $ \sum_{i=1}^{3m} \bar{x}_i  + \sum_{j=1}^{N} \bar{y}_j \leq 4m $, we know 
$ \sum_{j=1}^{N} \bar{y}_j < m  $. However, this is impossible since   
\begin{equation*}
\begin{aligned}
\sum_{i=1}^{3m} a_i \bar{x}_i &= \sum_{i=1}^{3m}a_i  \sum_{j=1}^{N} \bar{z}_{ij} =\sum_{j=1}^{N} \sum_{i=1}^{3m}  a_i \bar{z}_{ij} \leq \sum_{j=1}^{N} \bar{y}_j B < mB, &&\\
\end{aligned}
\end{equation*}
where the first equality follows from $ \bar{x}_i = \sum_{j=1}^{N} \bar{z}_{ij} $ in problem \eqref{stage1-3par} and the first inequality follows from
$ \bar{z}_{ij} \leq \bar{y}_j $ and $  \sum_{i=1}^{3m} a_i \bar{z}_{ij} \leq B $ in problem \eqref{stage1-3par}.  By \eqref{firstcondition}, without loss of generality, we  can assume that $ \bar{y}_j =1 $ for $ j \in \{1,\ldots,m\} $ and $ \bar{y}_j=0 $ for $ j \in \{m+1,\ldots,N\} $.
We further show  
\begin{equation}
\label{partition}
\sum_{i=1}^{3m} a_i \bar{z}_{ij} = B, \ 1 \leq j \leq m.  
\end{equation}
Otherwise, we have $ \sum_{i=1}^{3m} a_i \bar{z}_{ij} < B $ for some $ j $. From $ \bar{y}_j=0 $ for $ j \in \{m+1,\ldots,N\} $ and $ \bar{z}_{ij} \leq \bar{y}_j $, we know $ \bar{z}_{ij} = 0$ for $ i \in \{1,\ldots,N\} $, $ j \in \{m+1,\ldots,N\} $. Using this, we have $ 	\sum_{i=1}^{3m} a_i \bar{x}_i = \sum_{j=1}^{N} \sum_{i=1}^{3m}  a_i \bar{z}_{ij} = \sum_{j=1}^{m} \sum_{i=1}^{3m}  a_i \bar{z}_{ij}  < mB $, which contradicts the fact that the optimal value of problem \eqref{stage1-3par} is $ mB $. 
Now, let $ {\cal S}_j = \{ i \, | \, \bar{z}_{ij} = 1, \, 1 \leq i \leq 3m \} $ for $ j \in \{1,\ldots,m\} $. Then it follows from \eqref{partition} that 
\begin{equation*}
\sum_{i \in S_j} a_i = \sum_{i=1}^{3m} a_i \bar{z}_{ij} = B.
\end{equation*}
From $ \bar{x}_i = \sum_{j=1}^N \bar{z}_{ij}$, we have $ {\cal S}_{j_1} \cap {\cal S}_{j_2} = \emptyset $ for any $ j_1 $, $ j_2  \in \{1,\ldots,m\} $ with $ j_1 \neq j_2 $. Furthermore, combining \eqref{partition} with $  \frac{B}{4} < a_i < \frac{B}{2} $, it follows $ \sum_{i=1}^{3m}\bar{z}_{ij} = 3 $, or equivalently $ |{\cal S}_{j}| = 3 $, $   1\leq j \leq m  $. This, together with the fact that $ {\cal S}_1,\ldots,{\cal S}_m $ are disjoint, indicates
\begin{equation*}
{\cal S}_1 \cup  \cdots \cup {\cal S}_m = \{ 1,\ldots,3m \}.
\end{equation*}
Hence the answer to the given instance of the 3-partition problem is true.

Finally, the above transformation can be done in polynomial time. Since the 3-partition problem is strongly NP-complete, we conclude that problem \eqref{stage1-a} is strongly NP-hard. \qedd

Proposition \ref{nphard} implies that, unless P=NP, there are no polynomial time algorithms, which can solve problem \eqref{newproblem-a} to global optimality. Thus, we should develop efficient algorithms for approximately solving the problem especially when the dimension of the problem is large.

\subsection{More insight into intrinsic difficulty of problem \eqref{newproblem-a}}
\label{sec:analysis}
In the above subsection, we basically show that the problem of maximizing the total number of programmable flows is NP-hard. In this subsection, we study another two special cases of problem \eqref{newproblem-a}, i.e., the problem of minimizing the total propagation delay between controllers and SDN switches if all switches can be upgraded to SDN switches and all flows can be programmable, and show that they are still NP-hard. The analysis of the second case shows that there probably does not exist a constant-ratio approximation algorithm for problem \eqref{newproblem-a}. These analysis results provide more insight into the intrinsic difficulty of problem \eqref{newproblem-a}. 

To simplify the following analysis, we replace constraint \eqref{con3} in problem \eqref{newproblem-a} with the following constraint 
\begin{equation}
\sum_{i=1}^{N}(R_i * z_{ij}) \leq A*y_j,~\forall~j. 
\label{tightenedcons3}
\end{equation}
Due to the binary nature of variables $y$ and $z$ and constraint (2), this replacement does not change the solution of the problem.

\subsubsection{Case 1}
In this case, we assume: (i) the processing capacity of the controller is large enough to control all switches' requests, and (ii) the upgrade budget is large enough to upgrade all switches to SDN switches. Mathematically, the above two assumptions can be written as (i) $ A > \sum_{i=1}^N R_i $ and (ii) $ M - \gamma N \geq 1 $. Under these two assumptions, we can reduce problem \eqref{newproblem-a} via the following three steps. First, combining assumption (i) with constraint \eqref{controller}, we can see that {constraint \eqref{tightenedcons3} becomes redundant for problem \eqref{newproblem-a}}, and thus we can remove it.
Second, the above assumptions (i) and (ii) can guarantee $ x_i = 1$ for all $i\in[1, N]$ in the optimal solution of problem \eqref{newproblem-a}. Hence, constraint \eqref{con1} changes into
\begin{equation}
	\label{Rlargecons1}
	\sum_{j=1}^N z_{ij} = 1,~\forall ~i,
\end{equation}
and constraint \eqref{con4} changes into 
\begin{equation}
	\label{Rlargecons4}
	\sum_{j=1}^N y_j \leq M-\gamma N .
\end{equation}
Third, we can remove constraint \eqref{con2} from problem \eqref{newproblem-a}. We argue this as follows: Let us suppose the optimal solution of problem \eqref{newproblem-a} without constraint \eqref{con2} is $ (\bar{x}, \bar{y}, \bar{z}) $ with $ \bar{y}_j > \sum_{i=1} ^N \bar{z}_{ij}$ for some $j \in \mathcal{J} \subseteq [1,N] $. By setting $ \bar{y}_j = 0$ for all $j\in \mathcal{J}$, we obtain a feasible solution for problem \eqref{newproblem-a}, which yields the same objective value as that at $ (\bar{x}, \bar{y}, \bar{z}) $. Combining the above three steps together, we can reduce problem \eqref{newproblem-a} to the problem of selecting appropriate controllers to control all the SDN switches to minimize the total propagation delay as follows:

\begin{equation*}
\label{Rlargeprob}
\begin{aligned}
& \underset{y,z}{\text{min}}
& &  \sum_{i=1}^{N} \sum_{j=1}^{N} D_{ij} z_{ij}  & \\
& \text{s.t.} & & \eqref{controller}\eqref{Rlargecons1}\eqref{Rlargecons4},  &\\
& & & y_{j},\, z_{ij} \in \{0,1\},~\forall~i,~\forall~j.                    
\end{aligned}
\end{equation*}
The above problem is called \emph{p-Median Problem} and is NP-hard \cite{kariv1979algorithmic}. Thus, problem \eqref{newproblem-a} in this special case is also NP-hard. 

\subsubsection{Case 2}
In the case, we assume: (i) all switches have the same number of flows and (ii) the upgrade budget is large enough to upgrade all switches to SDN switches and to deploy controllers to control all upgraded switches. These two assumptions can be written as (i) $ R_1 = \cdots = R_N \triangleq R $ and (ii) $(M-\gamma N) \lfloor \frac{A}{R}\rfloor \geq N,$ where $\lfloor\cdot\rfloor$ is the floor operator. Substituting assumption (i) into constraint \eqref{tightenedcons3}, we obtain
\begin{equation} 
\label{Requalcon3}
\sum_{i=1}^N z_{ij} \leq \Bigl\lfloor \frac{A}{R} \Bigr\rfloor y_j, ~\forall~j.
\end{equation}
It is not difficult to argue that assumptions (i) and {(ii)} guarantee that $ x_i = 1$ for all $i \in[1,N] $ in the optimal solution of \eqref{newproblem-a}. Following steps 2 and 3 of case 1,  we can reformulate problem \eqref{newproblem-a} in this special case as follows:
\begin{equation*}
\label{Requalprob}
\begin{aligned}
& \underset{y,z}{\text{min}}
& &  \sum_{i=1}^{N} \sum_{j=1}^{N} D_{ij} z_{ij}  & \\
& \text{s.t.} & &  \eqref{controller}\eqref{Rlargecons1}\eqref{Rlargecons4}\eqref{Requalcon3},  &\\
& & & y_{j},\,z_{ij} \in \{0,1\},~\forall~i,~\forall~j.                      
\end{aligned}
\end{equation*}
The above problem is called \emph{Uniform Capacitated $p$-Median Problem}. It is a NP-hard problem, and existing studies do not find a constant-ratio approximation algorithm for it \cite{li2017uniform}. Our problem \eqref{newproblem-a} is more general than the above problem. Based on the above analysis, we can conclude that there probably does not exist a good approximation algorithm for our {problem} \eqref{newproblem-a}. 

\section{Problem solution}
\label{solution}
In this section, we propose an exact solution for solving the problem of small networks and an efficient heuristic algorithm for solving the problem of large networks.

\subsection{Exact solution}
\label{sec:final_optimal}
Typically, we can use existing integer program solvers to obtain problem \eqref{newproblem-a}'s optimal solution. Here we use GUROBI solver \cite{gurobi} for solving our problem. GUROBI uses a branch-and-cut \cite{integerprograms} framework and is recognized as one of the fastest integer program solvers \cite{gurobibest}. The branch-and-cut framework usually uses Linear Programming (LP) relaxation to obtain an upper bound. However, the LP relaxation is usually very weak \cite{magnanti1995modeling}, and thus it is difficult for the solver to quickly solve the integer programming problem. In our experiments, we also observe the weakness of the LP relaxation of problem \eqref{newproblem-a}. To accelerate the solution process, we propose a better (re)formulation by strengthening some constraints and adding some valid inequalities by exploiting problem \eqref{newproblem-a}'s structure.

\subsubsection{Strengthening constraints}
First, we strengthen constraint \eqref{con3} as \eqref{tightenedcons3}. Compared to constraint \eqref{con3}, in \eqref{tightenedcons3} we use a tighter upper bound $A*y_j$, which improves the objective value of the LP relaxation of problem \eqref{newproblem-a} and thus reduces the solution time of the branch-and-cut algorithm in GUROBI. The details can be found in \cite{integerprograms}. 

\subsubsection{Adding valid inequalities} Generating efficient cutting planes is a key step in the branch-and-cut framework. In the mixed integer problems, one novel technique is to aggregate multiple constraints together to generate redundant but efficient constraints for the problem. This is because, in the mixed integer problems, some redundant constraints could be used as base constraints to generate cuts and may accelerate the solution process of the branch-and-cut framework \cite{achterberg2016presolve}. Modern solvers can generate cuts automatically but usually do not consider the problem's structure. By exploiting the structure of problem \eqref{newproblem-a}, below we add some aggregated constraints to problem \eqref{newproblem-a} to help the solver efficiently generate cuts. 

Adding all the constraints in \eqref{tightenedcons3} and using the constraints \eqref{con1}, we obtain 
\begin{equation}
\sum_{i=1}^{N}(R_i * x_i) \leq A*\sum_{j=1}^N y_j. 
\label{medialcons2}
\end{equation}
Combining \eqref{medialcons2} with \eqref{con4}, we obtain the following inequality 
\begin{equation}
\sum_{i=1}^{N}(R_i+\gamma A) * x_i \leq A*M.
\label{medialcons3}
\end{equation}
The inequalities \eqref{medialcons2} and \eqref{medialcons3} are redundant for the LP relaxation problem of \eqref{newproblem-a}, but they can be used as base constraints to help the solver to find some knapsack cuts \cite{crowder1983solving} and further accelerate the solution process. The similar technique has been used for the problem of single-source capacitated facility location in \cite{gadegaard2018improved}. 

Substituting \eqref{con1} into the objective of problem \eqref{newproblem-a}, we have
$$  obj= \sum_{i=1}^{N} R_{i} \sum_{i=1}^{N}z_{ij} -  \lambda \sum_{i=1}^{N} \sum_{j=1}^{N} D_{ij} z_{ij} = \sum_{i=1}^{N}  \sum_{i=1}^{N}(R_{i}-\lambda D_{ij} )z_{ij}. $$ Based on the above analysis, we reformulate our problem as the {final problem:}
\begin{equation}
\tag{P'}
\label{finalproblem}
\begin{aligned}
& \underset{x,y,z}{\text{max}}
& &  \sum_{i=1}^{N} \sum_{j=1}^{N} \omega_{ij} z_{ij}  & \\
& \text{s.t.} &   &  \eqref{controller} \eqref{con1}  \eqref{con2} \eqref{con4}\eqref{tightenedcons3}   \eqref{medialcons2}\eqref{medialcons3},\\ 
& & & x_i, \, y_j, \, z_{ij} \in \{0,1\},~\forall~i,~\forall~j,                      
\end{aligned}
\end{equation}
where $x_i$, $y_j$, $z_{ij}$ are design variables and 
\begin{equation}
	\label{omegadef}
		\omega_{ij} = R_i - \lambda D_{ij}.
\end{equation}
We will illustrate the effectiveness and efficiency of the new formulation with simulation results in Section \ref{sec:effect_test}.

\begin{table}[!t]
\centering
\caption{Notations}
\label{table:notation}
\begin{tabular}{|c|l|}
\hline
Notation     & Meaning                                                                                                                                        \\ \hline
$N$           & the number of switches                                                                                                                         \\ \hline
$M$           & the upgrade budget                                                                                                                               \\ \hline
$A$           & the processing capacity of a controller                                                                                                         \\ \hline
$i$           & the index of a switch, $i \in [1,N]$                                                                                                         \\ \hline
$j$           & the index of a controller, $j \in [1,N]$                                                                                                        \\ \hline
$\gamma$           &the cost ratio of an SDN switch to a controller                                                                                                         \\ \hline
$\mathcal{W}$ & the set of weights, $\mathcal{W} = \{\{w_{11},...,w_{1j},...,w_{1N}\},...,\{w_{i1},...,w_{ij},...,w_{iN}\}, ..., \{w_{N1},...,w_{Nj},...,w_{NN}\}, i,j \in [1,N]\}$                                     \\ \hline
$\mathcal{R}$ & the set of the number of flows in each switch, $\mathcal{R} = \{R_i, i \in [1,N]\}$                                                  \\ \hline
$\mathcal{X}$ & the set of updated switches, $\mathcal{X} = \{i\in[1, N] \ |\ x_i =1\}$                                   \\ \hline
$\mathcal{Y}$ & the set of deployed controllers, $\mathcal{Y} = \{j \in [1,N]\ |\ y_j=1 \}$                                             \\ \hline
$\mathcal{Z}$ & the set of the mapping relationship between upgraded switches and controllers, $\mathcal{Z} = \{(i,j)\in [1, N] \times [1, N] \ |\ z_{ij} =1\}$ \\ \hline
\end{tabular}
\vspace{-0.2cm}
\end{table}

\subsection{Heuristic algorithm}
\label{\solution}
The new formulation helps the optimization solver to accelerate the solution process in small networks. However, it still requires a very long time or sometimes is impossible for the solver to find a feasible solution for the problem of large networks. In this section, we propose a heuristic algorithm for solving the problem to achieve the tradeoff between the performance and the time complexity. The heuristic algorithm is based on formulation \eqref{finalproblem}.

The intrinsic difficulty of our problem lies in the interrelationship of the selection of switches to upgrade, the selection of controllers to deploy, and the switch-controller mappings. Due to the interrelationship complexity of the three variables, we cannot change them at the same time. Based on our previous analysis, the mapping variables are more crucial (than switch update and controller deployment variables) since one mapping variable potentially can determine the other two variables. In the following part, we propose the \solution \ algorithm that determines the variables in the order of mappings, switches, and controllers.

\begin{algorithm}[!t]
\caption{\solution \ algorithm}
\label{alg:lp}
{\bf{Input:}} $N$, $M$, $A$, $\gamma$, $\mathcal{W}$, $\mathcal{R}$;

{\bf{Output:}} $\mathcal{X}$, $\mathcal{Y}$, $\mathcal{Z}$;
\begin{algorithmic}[1]
\State $\mathcal{X}=\emptyset$, $\mathcal{Y}=\emptyset$, $\mathcal{Z}=\emptyset$;
\State Generate vector $\mathcal{Z}^{{vec}} = \{{z}^{{vec}}_{l}, l \in [1,N*N] \}$ by solving the LP relaxation of problem \eqref{finalproblem} and sorting the results in the descending order;
\For {${z}^{{vec}}_{l} \in \mathcal{Z}^{{vec}}$}
	\State find switch index $i_0$ and controller index $j_0$ of ${z}^{{vec}}_{l};$
	\If {${i_0}\in \mathcal{X}$}
		\State	continue;
	\EndIf
	\If {\label{ln:greedy_test_map} $ \mathcal{Z} \cup {(i_0, j_0)}, \mathcal{X} \cup {i_0},$ and $\mathcal{Y} \cup {j_0} $ satisfy the constraints in \eqref{finalproblem}}
	     \If  {\label{ln:select0} ${j_0} \in\mathcal{Y}$ and $|\mathcal{X}| * \gamma+ |\mathcal{Y}|+ \gamma \leq M$}
			\State // controller $c_{j_0}$ is already deployed, just upgrade switch $s_{i_0}$
			\State // map switch $s_{i_0}$ to controller $c_{j_0}$
			\State \label{ln:select0_end}$\mathcal{X} \leftarrow \mathcal{X}\cup {i_0}$, $\mathcal{{Z}} \leftarrow \mathcal{{Z}}\cup {(i_0,j_0)}$;
		\ElsIf {\label{ln:select3}${j_0} \notin\mathcal{Y}$ and $|\mathcal{X}| * \gamma+ |\mathcal{Y}|+\gamma+ 1\leq M$}
			\State // deploy controller $c_{j_0}$ and upgrade switch $s_{i_0}$
			\State // map switch $s_{i_0}$ to controller $c_{j_0}$
			\State \label{ln:select3_end} $\mathcal{Y} \leftarrow \mathcal{Y}\cup {j_0}$, $\mathcal{X} \leftarrow \mathcal{X}\cup {i_0}$, $\mathcal{{Z}} \leftarrow \mathcal{{Z}}\cup {(i_0,j_0)}$;
		\EndIf
        \EndIf \label{ln:greedy_test_map_end}
         \If{\label{ln:greedy_cost_constraint}$M-\gamma<|\mathcal{X}| * \gamma+ |\mathcal{Y}|$} 
		\State break;
        \EndIf \label{ln:greedy_cost_constraint_end}
        \label{ln:greedy_end}
\EndFor 
\State \label{ln:return} return $\mathcal{X}, \mathcal{Y}, \mathcal{Z}$;
\end{algorithmic}
\end{algorithm}

The notations used in the algorithm are listed in Table \ref{table:notation}. The idea of our proposed algorithm, \solution, is to first select a switch-controller mapping in the descending order of their importance/weights and then tests whether building the mapping will satisfy the upgrade budget constraint: if yes, the switch and the controller in the mapping is selected; otherwise, a new mapping is tested. The procedure is terminated until there is no budget to build any mapping. Details of \solution \ are summarized in Algorithm \ref{alg:lp}. In line 1, at the beginning of the algorithm, the sets $\mathcal{X}$, $\mathcal{Y}$, and $\mathcal{Z}$ are set to be empty since no switches are upgraded, no controllers are deployed, and there are no mappings between switches and controllers. In line 2, we generate vector $\mathcal{Z}^{{vec}} = \{{z}^{{vec}}_{l}, l \in [1,N*N] \}$. We first relax the binary variables in problem \eqref{finalproblem} to continuous variables in [0,1], and get the LP relaxation solution $\mathcal{Z}^*$ of problem \eqref{finalproblem}. We then sort the solution $\mathcal{Z}^*$ in the descending order to get vector $\mathcal{Z}^{{vec}}$. The sorting operation enables us to test the mapping variables based on their probabilities. Next, we use our customized rounding technique to find the result by sequentially testing each ${z}^{{vec}}_{l} \in \mathcal{Z}^{{vec}}$. In line 4, we get ${z}^{{vec}}_{l}$'s corresponding switch index $i_0$ and controller index $j_0$. Lines 5-7 guarantee that we do not test a switch if it is already upgraded. In lines \ref{ln:greedy_test_map}-\ref{ln:greedy_test_map_end}, we test the mapping between switch $s_{i_0}$ and controller $c_{j_0}$. If the mapping satisfies the constraints of problem \eqref{finalproblem}, we upgrade the switch and deploy the controller at two specific conditions: (1) in lines \ref{ln:select0}-\ref{ln:select0_end}, controller $c_{j_0}$ is already deployed, and we only upgrade switch $s_{i_0}$ when the remaining upgrade budget is $\gamma$ or more; (2) in lines \ref{ln:select3}-\ref{ln:select3_end}, controller $c_{j_0}$ is not deployed, and we upgrade switch $s_{i_0}$ and deploy controller $c_{j_0}$ when the remaining upgrade budget is $\gamma+1$ or more. If either of the two conditions is satisfied, we set the mapping between switch $s_{i_0}$ and controller $c_{j_0}$. In lines \ref{ln:greedy_cost_constraint}-\ref{ln:greedy_cost_constraint_end}, if the remaining budget is less than $\gamma$, then we cannot upgrade any switch, and the algorithm returns the result and stops.

\begin{figure*}[t]
\centering
\subfigure[Test mapping $M_{s_4,c_4}$ and select it.]{
\includegraphics[width=1.4in]{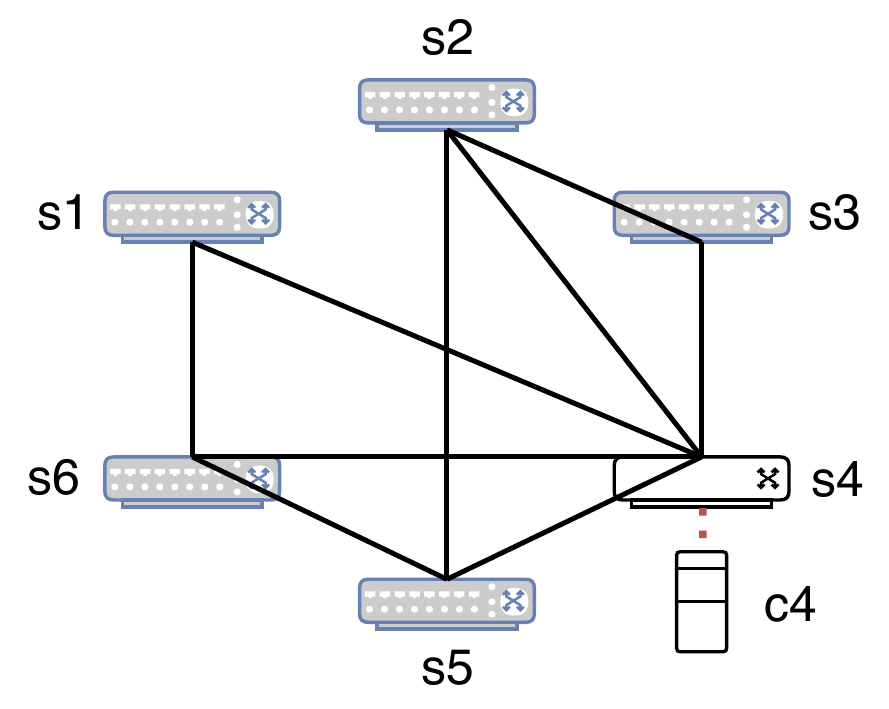}
}
\subfigure[Test mapping $M_{s_5,c_4}$ and not select it.]{
\includegraphics[width=1.4in]{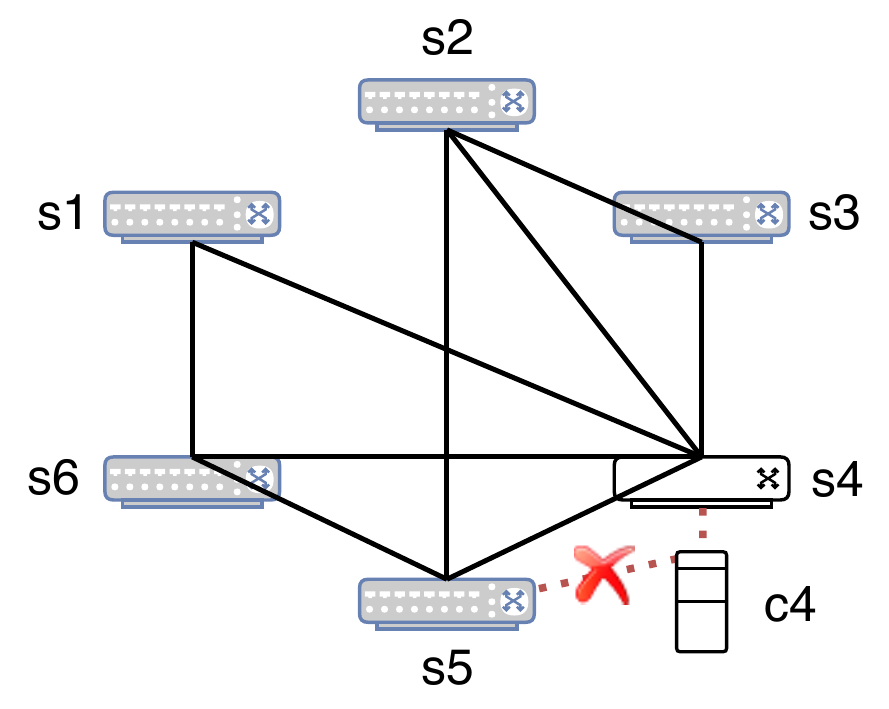}
}
\subfigure[Test mapping $M_{s_6,c_6}$ and select it.]{
\includegraphics[width=1.4in]{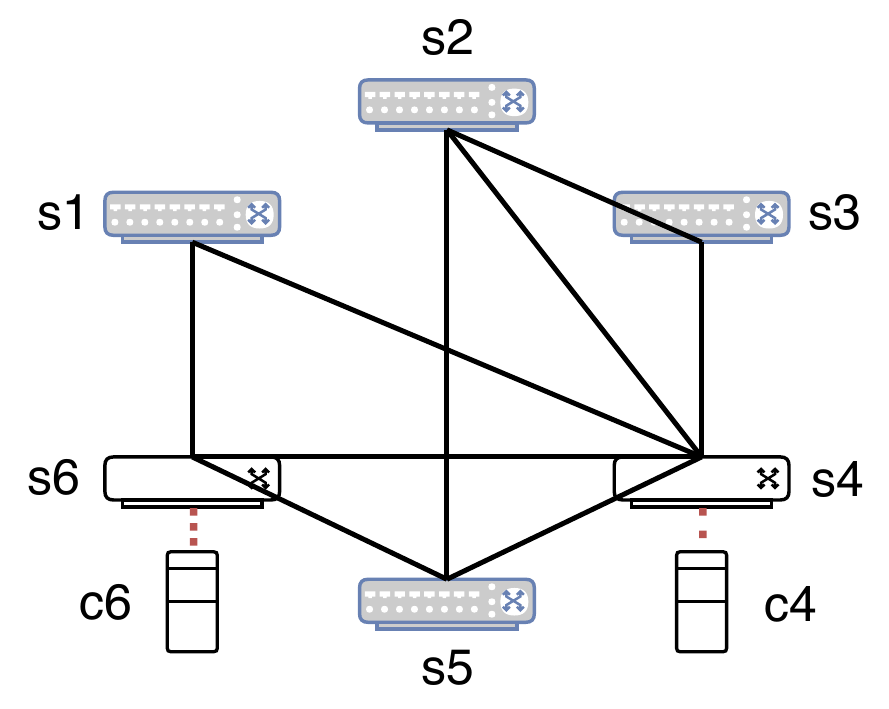}
}
\subfigure[Test mapping $M_{s_5,c_6}$ and select it.]{
\includegraphics[width=1.4in]{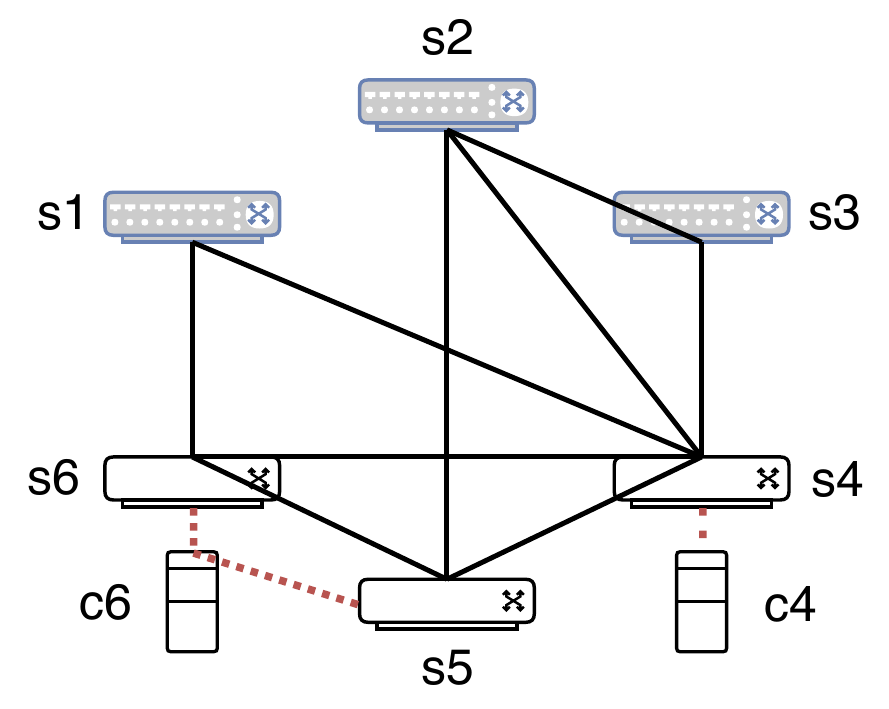}
}
\caption{An example of applying \solution \ to solve the problem. The weights of the mappings satisfy $z_{s_4,c_4}> z_{s_5,c_4}> z_{s_6,c_6}> z_{s_6,c_5}$.}
\label{fig:map_example}
\vspace{-0.4cm}
\end{figure*}
Fig. \ref{fig:map_example} shows an example of applying \solution \ to solve the problem. The mappings are tested in the decreasing order of the mappings' weights. In the figure, the mapping $M_{s_4,c_4}$ is first tested. Since the mapping satisfies the constraints in problem \eqref{finalproblem}, controller $c_4$, switch $s_4$ and the mapping $M_{s_4,c_4}$ are selected. Then, the mapping $M_{s_5,c_4}$ is tested but is not selected because the mapping does not satisfy the processing capacity constraint. Similarly, mappings $M_{s_6,c_6}$ and $M_{s_5,c_6}$ are tested one by one. If building one mapping does not violate the upgrade budget, the related switch, controller, and the mapping itself are selected.

\subsection{Worst-case complexity of \solution}
As shown in Algorithm \ref{alg:lp}, {\solution \ has two main steps: step 1 solves the LP relaxation and gets the weights by sorting the returned solution of the LP relaxation; step 2 generates a binary solution with a customized rounding technique. An LP can be solved in ${\cal O}(n^3*L)$ arithmetic operations by the interior-point methods, where $n$ is the number of variables and $L$ is the length of the input data of the problem \cite{Renegar1988}. Our problem has in total $N^2 + 2N$ variables, and the computational complexity of solving our LP relaxation is ${\cal O}(N^6*L)$. Sorting the returned solution of the LP relaxation takes ${\cal O}(N^2\log N^2) $ operations, and the customized rounding procedure runs at most $N^2$ iterations. In summary, the dominant computational cost of \solution \ is to solve one LP relaxation, and its worst-case complexity is ${\cal O}(N^6*L)$. In sharp contrast, the branch-and-cut framework  in the worst case needs to solve an exponential number of LP relaxations. Therefore, the worst-case complexity of \solution \ is significantly smaller than that of the branch-and-cut framework. 

\subsection{A polynomial time solvable case}
Our analysis in Sections \ref{sec:np} and \ref{sec:analysis} shows that problem \eqref{newproblem-a} is NP-hard, and there probably does not exist a constant-ratio approximation algorithm for it. Therefore, our proposed \solution \ algorithm generally does not have a (constant-ratio approximation) performance guarantee. In this subsection, we consider a special case of problem \eqref{newproblem-a}, where each controller can control at most one SDN switch (i.e., $ R_i + R_j > A$ for all $i \neq j $), and show that MapFirst is guaranteed to find the optimal solution of problem \eqref{newproblem-a} in this special case.

\begin{mypro}
	\label{easycase}
	If $ R_i + R_j >A$ for all $i \neq j $, problem \eqref{newproblem-a} can be solved (to globally optimality) in $ \mathcal{O}(N^3) $, and MapFirst is guaranteed to find the optimal solution of problem \eqref{newproblem-a}.
\end{mypro}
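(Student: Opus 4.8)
The plan is to use the hypothesis $R_i+R_j>A$ for $i\ne j$ to collapse the combinatorial structure of problem \eqref{newproblem-a} into a matching problem, and then handle the two assertions of Proposition~\ref{easycase} separately.

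\textit{Reduction.} First I would observe that for any feasible (binary) point, constraint~\eqref{con3} forces $\sum_i z_{ij}\le 1$ for every $j$: if two distinct switches $i,i'$ were both mapped to controller $c_j$, then $\sum_i R_i z_{ij}\ge R_i+R_{i'}>A$. Together with \eqref{con1}, \eqref{con2}, and \eqref{controller}, this shows that every feasible solution is just a (partial) matching $\mathcal{Z}=\{(i,j):z_{ij}=1\}$ between the $N$ switches and the $N$ controllers, with $x_i=1$ iff switch~$s_i$ is matched and $y_j=1$ iff controller~$c_j$ is matched; constraint \eqref{con4} then reads $(\gamma+1)\,|\mathcal{Z}|\le M$, i.e. $|\mathcal{Z}|\le K:=\lfloor M/(\gamma+1)\rfloor$. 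Conversely, any matching of size at most $K$ that uses only pairs with $R_i\le A$ yields a feasible point. Substituting \eqref{con1} into the objective exactly as in the derivation of \eqref{finalproblem} and using \eqref{omegadef}, problem \eqref{newproblem-a} becomes
\[
\max\Big\{\,\sum_{(i,j)\in\mathcal{Z}}\omega_{ij}\;:\;\mathcal{Z}\ \text{a matching in the complete bipartite graph}\ K_{N,N},\ |\mathcal{Z}|\le K\,\Big\},
\]
where one may discard all pairs with $\omega_{ij}\le 0$, since removing such a pair from a matching never decreases the objective and only relaxes the budget.

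\textit{Polynomial solvability.} I would then solve this cardinality-constrained maximum-weight bipartite matching problem as a min-cost flow: a source feeding each switch node with capacity $1$, an arc $i\to j$ of capacity $1$ and cost $-\omega_{ij}$ for each pair with $\omega_{ij}>0$, each controller node feeding a sink with capacity $1$, and the total flow value capped at $K$. Running successive shortest augmenting paths, each path computed in $\mathcal{O}(N^2)$ time using node potentials, and stopping as soon as the cheapest augmenting path has nonnegative cost or $K$ augmentations have been performed, uses at most $\min(K,N)\le N$ augmentations and hence $\mathcal{O}(N^3)$ time in total; equivalently, one pads to a balanced $N\times N$ assignment instance and applies the Hungarian algorithm. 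This establishes the first claim of Proposition~\ref{easycase}.

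\textit{Optimality of \solution.} Two facts are needed. First, in this special case \solution\ never executes the branch in lines~\ref{ln:select0}--\ref{ln:select0_end}: if $c_{j_0}$ is already deployed it controls some $i'\ne i_0$ (line~5 ensures $i_0\notin\mathcal{X}$), so adding $(i_0,j_0)$ would violate \eqref{con3} and the test of line~\ref{ln:greedy_test_map} fails; hence only fresh switch/controller pairs are ever added and the output is a matching with at most $K$ pairs. Second, and this is the main point, I would prove that in this special case the LP relaxation of \eqref{finalproblem} has an integral optimal solution and, moreover, that its optimal face is integral, so the LP solver in line~2 returns $z^*=\mathbf{1}_{\mathcal{Z}^*}$ for some maximum-weight matching $\mathcal{Z}^*$ with $|\mathcal{Z}^*|\le K$. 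Granting this, the $1$-entries of $z^*$ head the sorted list $\mathcal{Z}^{vec}$; \solution\ processes them first and inserts all of $\mathcal{Z}^*$ (a matching that satisfies the constraints of \eqref{finalproblem} and respects the budget), and any further pair it might append is zero-valued and cannot raise the objective, so the returned solution has objective value $\sum_{(i,j)\in\mathcal{Z}^*}\omega_{ij}$, the optimum of \eqref{finalproblem}, and hence of \eqref{newproblem-a}.

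\textit{Main obstacle.} The crux is the integrality of the optimal face of the LP relaxation of \eqref{finalproblem} under $R_i+R_j>A$: the $R_i$-weighted capacity constraints \eqref{tightenedcons3} make the full LP polytope non-integral, so total unimodularity is not available. My plan for this is to start from an LP-optimal vertex $(x^*,y^*,z^*)$, assume some $z^*_{ij}\in(0,1)$, and use the switch-side equalities $\sum_j z_{ij}=x_i\le 1$ together with the hypothesis (which limits each controller to fractionally carrying strictly less than two switches) to exhibit an alternating path or cycle in the support of $z^*$ along which a small amount of mass can be shifted while preserving all constraints and not decreasing the objective, contradicting either vertexness or optimality. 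Carrying out this exchange argument cleanly, and then verifying the remaining bookkeeping in \solution\ (the budget tests in lines~\ref{ln:select0}--\ref{ln:select3_end} and that the greedy rounding reconstructs $\mathcal{Z}^*$ without appending any objective-lowering pair), are the parts that require care.
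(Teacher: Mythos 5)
Your reduction is the same one the paper uses: under $R_i+R_j>A$ every controller serves at most one switch, feasible points are partial matchings of size at most $\lfloor M/(\gamma+1)\rfloor$, and the problem collapses to a cardinality-constrained maximum-weight bipartite matching (the paper calls it the $k$-cardinality assignment problem). Your $\mathcal{O}(N^3)$ argument via min-cost flow / the Hungarian algorithm is complete and fine; the paper instead cites the primal algorithm of Dell'Amico and Martello for the same bound, so that half of the proposition is in good shape either way.

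The gap is in the \solution\ half, and it is exactly the step you flag as the ``main obstacle'': you plan to prove that the optimal face of the LP relaxation of \eqref{finalproblem} \emph{itself} is integral under the hypothesis. That statement is false in general. The reduction $\sum_i R_i z_{ij}\le A y_j \Rightarrow \sum_i z_{ij}\le y_j$ is valid only for binary $z$; for fractional $z$ a controller may fractionally carry several switches. Concretely, take $R_1=R_2=R_3=0.6A$ (so $R_i+R_j>A$) and $M=3\gamma+2$: integrally one can afford at most two switch--controller pairs (objective $\approx 1.2A$), but the point $z_{ij}=1/3$, $y_j=2/3$, $x_i=1$ for all $i,j$ satisfies every constraint of \eqref{finalproblem}, including \eqref{tightenedcons3}, \eqref{medialcons2}, and \eqref{medialcons3}, with objective $\approx 1.8A$. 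So the LP you propose to analyze is not tight, no exchange argument will make it so, and the sorted fractional solution need not sit on top of an optimal matching. The paper's proof avoids this by first reformulating the \emph{integer} program: it replaces \eqref{tightenedcons3} by $\sum_i z_{ij}\le y_j$, eliminates $x$ and $y$, and turns the budget into the pure cardinality bound \eqref{Ronebudget}, arriving at problem \eqref{Roneprob}, whose LP relaxation is integral by the known result on the $k$-cardinality assignment polytope (this is why the paper's conclusion is phrased as ``after some preprocessing steps, the LP relaxation in \solution\ is tight''). To repair your argument you must perform this preprocessing before invoking (or proving) LP integrality; proving integrality of the $k$-cardinality assignment polytope from scratch by your alternating-path exchange is then a reasonable, if nontrivial, route, but applying that exchange to the un-reduced LP of \eqref{finalproblem} cannot work.
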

\begin{proof}
	Without loss of generality, we assume $ 0 < R_i \leq A$ for all $i \in [1,N] $. Combining this assumption with the assumption $ R_i + R_j > A$ for all $i \neq j $, constraint \eqref{tightenedcons3} reduces to $ \sum_{i=1}^N z_{ij} \leq y_j $, which, together with \eqref{con2}, implies
	\begin{equation}
	\label{Ronecons}
	y_j = \sum_{i=1}^N z_{ij}, ~\forall ~j. 
	\end{equation}
	By substituting \eqref{con3} and \eqref{Ronecons} into \eqref{con4}, we have
	\begin{equation}
	\label{Ronebudget}
	\sum_{i=1}^N \sum_{j=1}^N z_{ij} \leq \Bigl\lfloor \frac{M}{\gamma+1}\Bigr\rfloor.
	\end{equation}
	From \eqref{Ronecons}, we can simply remove constraint \eqref{controller} from problem \eqref{newproblem-a} and reformulate it as follows:
	\begin{equation}
	\label{Roneprob}
	\begin{aligned}
	& \underset{z}{\text{max}}
	& &  \sum_{i=1}^{N} \sum_{j=1}^{N} \omega_{ij} z_{ij}  & \\
	& \text{s.t.} & & \sum_{j=1}^N z_{ij} \leq 1,~\forall~i,  \ \sum_{i=1}^N z_{ij} \leq 1,
	\ \forall~j,   &\\
	& & & \eqref{Ronebudget},\, z_{ij} \in \{0,1\},~\forall~i,~\forall~j.                      
	\end{aligned}
	\end{equation}
Due to $ R_i >0$ for all $ i \in [1,N]$, the definition of $w_{ij}$ in \eqref{omegadef}, and the selection of $ \lambda $ in \eqref{equivalentcondition}, we know $ \omega_{ij} > 0$ for all $ i,~j \in[1,N]$. Hence, inequality \eqref{Ronebudget} can be rewritten as an equality, and problem \eqref{Roneprob} is a {\it $ k $-cardinality assignment} problem. It has been shown in \cite{dell1997k} that the LP relaxation of problem \eqref{Roneprob} is tight, i.e., solving the linear relaxation of problem \eqref{Roneprob} with an appropriate method (e.g., the simplex method) returns a binary solution. {Furthermore, using the {\emph{primal algorithm}} presented in \cite{dell1997k}, we can solve problem \eqref{Roneprob} with $ \mathcal{O}(N^3) $ complexity}. The proof is completed.
\end{proof} 

The above proposition shows that, if each controller can control at most one SDN switch, after some preprocessing steps, the LP relaxation in \solution \ is tight, and \solution \ can return an optimal solution of problem \eqref{newproblem-a}. 

\section{Simulation results}
\label{simulation}
\subsection{Simulation setup} 
In our simulation, we choose some backbone topologies from Topology Zoo \cite{6027859} to evaluate the performance of our proposed solution. In the topologies, each node has a latitude and a longitude. Since the propagation delay of a flow request is usually proportional to the distance from a switch to a controller, we use the distance between two nodes to represent the propagation delay between an SDN switch and a controller. We use $M\_percent$ to denote the ratio of the given upgrade budget to the cost of upgrading all switches in the network. In our simulation, $M\_percent$ changes from 5\% to 50\%. We follow the assumption in \cite{poularakis2017one}\cite{jia2016incremental} that the number of programmable flows in an SDN switch is proportional to the number of its links. In practice, we can analyze the traffic history at each switch to get the real traffic statistic. We have analyzed more than 50 topologies in Topology Zoo and find that most nodes have two or three links. We set the normalized processing ability of a controller $A = 50$, and thus one controller is able to control at least {ten} SDN switches on average. Note that our problem can take into consideration of heterogeneous controllers by setting the processing abilities of different controllers with different values. The recommended system requirement of one OpenDayLight controller instance is 8 Cores, 8G RAM and 64GB storage \cite{odltest}, and the resilient three-controller deployment requires at least three physical servers and costs about \$2000. One typical SDN switch is about \$8000. Hence, we set the cost ratio between an SDN switch and one controller $\gamma = 4$ . 

\subsection{Compared algorithms}
\begin{enumerate}
\item Optimal: the optimal solution of problem \eqref{finalproblem} that maximizes the number of programmable flows and minimizes the total propagation delay between upgraded SDN switches and controllers. We solve the problem by using GUROBI \cite{gurobi}.
\item FlowOnly: the optimal solution of problem \eqref{stage1-a} that only maximizes the number of programmable flows. Problem \eqref{stage1-a} is also solved by using GUROBI \cite{gurobi}.
\item \solution: we first use GUROBI \cite{gurobi} to solve the LP relaxation of problem \eqref{finalproblem}, and then use a customized rounding technique to sequentially determine the variables in the order of mappings, switches, and controllers. The details can be found in Section \ref{\solution}. 
\item WeightFirst: this algorithm is similar to \solution, but the key difference is that it greedily tests and picks the switch-controller mapping in the descending order of weights $\left\{w_{ij}\right\},i, j\in [1,N]$. 
\end{enumerate}

\begin{table}[t]
	\caption{Computational results of (a) original formulation \eqref{newproblem-a},  and (b) formulation \eqref{finalproblem}.}
	\label{allgapimprovement}
	\centering	
	\begin{tabular}{|c|c|c|c|c|}
		\hline
		\multirow{2}{*}{Topology name} &  \multicolumn{2}{c|}{Topology info}& \multicolumn{2}{c|}{Elapsed CPU time (seconds)}  \\
		\cline{2-5} &{\# of nodes }&{\# of links}& (a)       & (b) \\
		\hline
		Colt			   &153 &185&  2046.46        &33.04  \\
		\hline 
		GtsCe	                &150 &193& 1365.47  &89.17  \\
		\hline 
		Cogentco	          &197 &245&  3600.00         & 419.49       \\
		\hline
		Condensed\_west\_europe & 278 &394 &3600.00          & 576.93\\
		\hline
		Condensed & 463 & 620 & 2721.63  &   1013.31\\
		\hline
	\end{tabular}
	\vspace{-0.4cm}
\end{table}

\subsection{Effectiveness of formulation \eqref{finalproblem}}
\label{sec:effect_test}
We test the effectiveness of new formulation \eqref{finalproblem} under different topologies from Topology Zoo \cite{6027859}. We set a time limit of 3600 seconds for the branch-and-cut algorithm in GUROBI (i.e., we terminate the algorithm if it does not find the solution within 3600 seconds) and set $M\_percent = 50\%$ for each topology. Table \ref{allgapimprovement} summarizes the computational results. We can see from the table that: (1) for the problem instances Cogentco and Condensed\_west\_europe, GUROBI can successfully solve the new formulation \eqref{finalproblem} within the given time limit but fail to solve the original formulation \eqref{newproblem-a}; (2) for the problem instances Colt, GtsCe, and Condensed, GUROBI can successfully solve both problem formulations within the given time limit but solving formulation \eqref{finalproblem} takes significantly less time than solving formulation \eqref{newproblem-a}. These simulation results clearly show the effectiveness of formulation \eqref{finalproblem}, i.e., the newly added constraints \eqref{tightenedcons3}, \eqref{medialcons2}, and \eqref{medialcons3} indeed work and significantly accelerate the solution process.

\begin{figure*}[t]
\centering
\subfigure[\topoa]{
\includegraphics[width=2in]{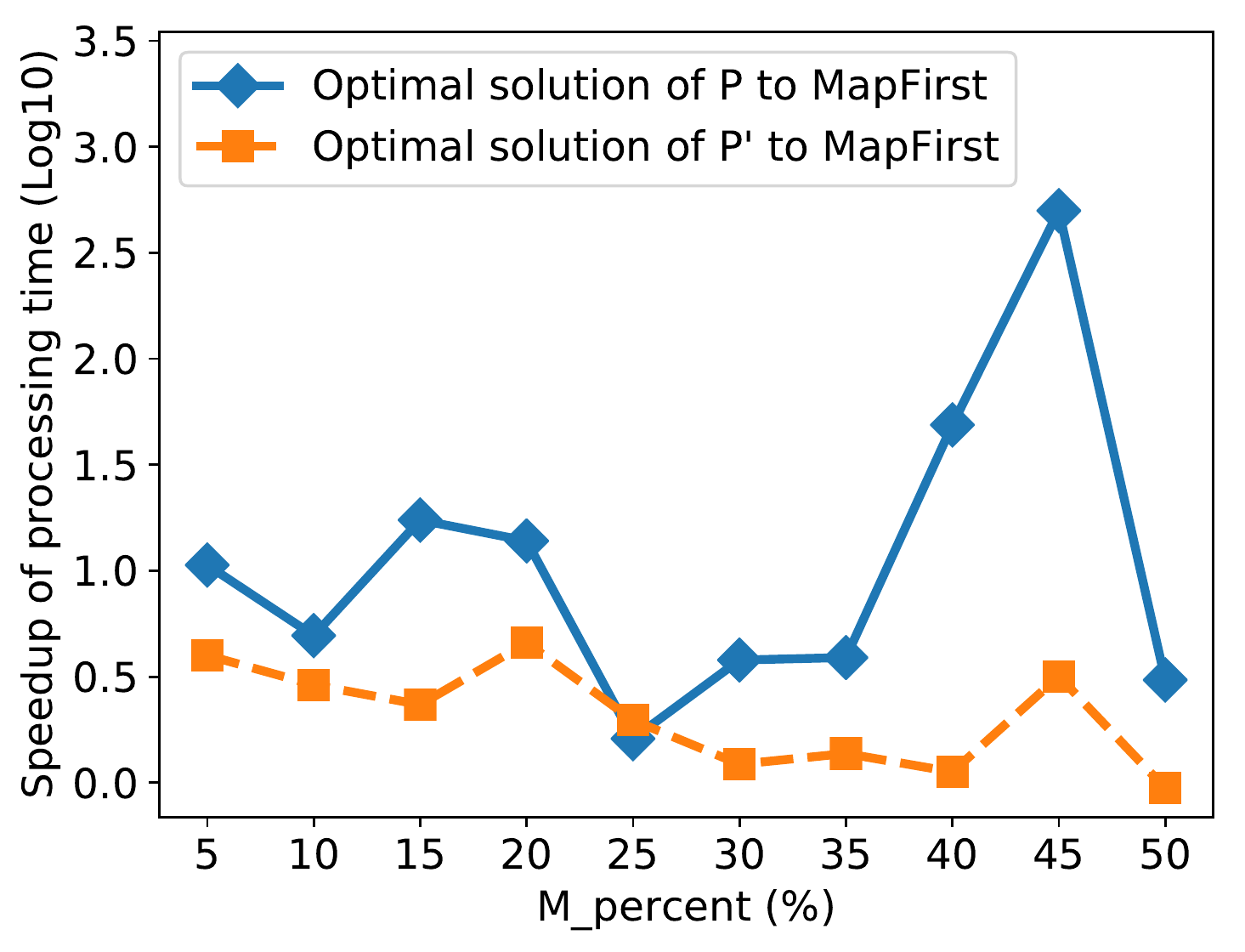}
}
\subfigure[\topob]{
\includegraphics[width=2in]{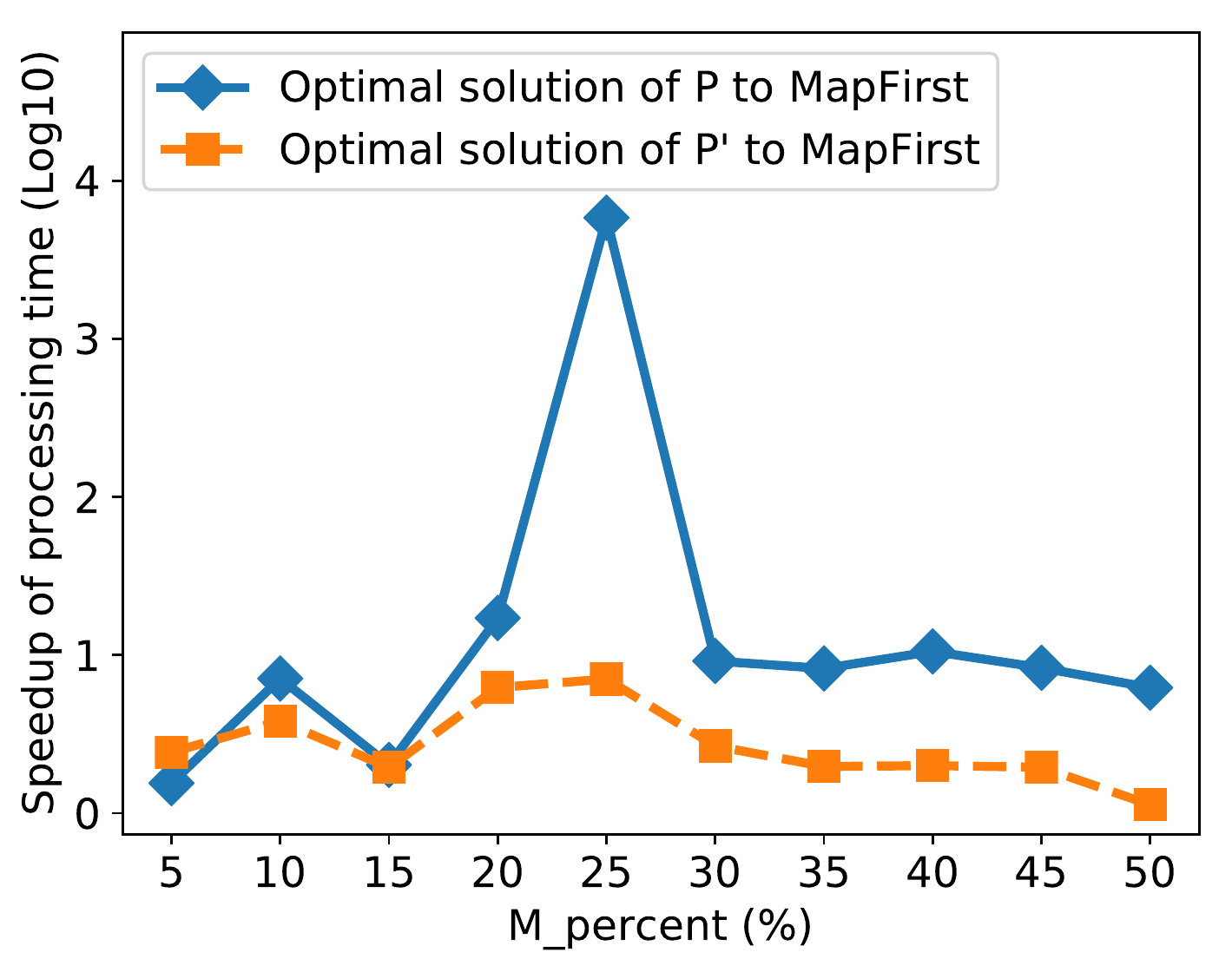}
}
\subfigure[\topoe]{
\includegraphics[width=2in]{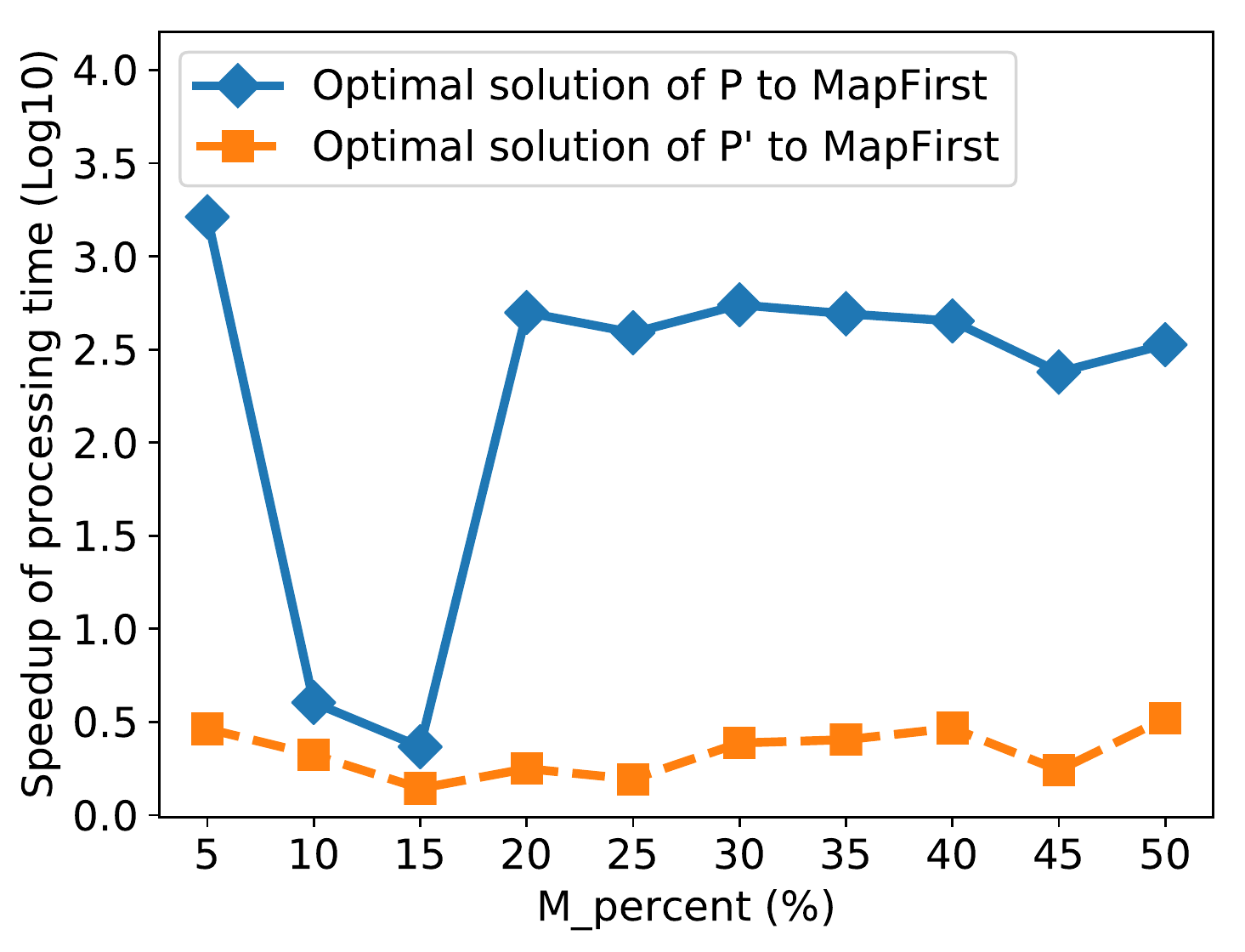}
}
\caption{Speedup of the CPU time in units of $Log{10}$.}
\label{fig:process_time}
\end{figure*}

\subsection{Performance of CPU time }
In the rest of this section, we focus on three topologies \topoa, \topob, and \topoe \ from Topology Zoo \cite{6027859}. More specifically, \topoa \ is a small topology with 25 nodes and 57 links, \topob \ is a medium-size topology with 41 nodes and 59 links, and \topoe \ is a large topology with 197 nodes and 245 links. All of our simulations below are performed on these three topologies. 

We use the ratio of the CPU time of an algorithm to that of \solution \ as the metric to measure the efficiency of the algorithm. Fig. \ref{fig:process_time} shows the results, where y-axis is in the $Log{10}$ scale. Recall that problem \eqref{newproblem-a} is the original problem, and problem \eqref{finalproblem} is problem \eqref{newproblem-a} with strengthened constraints and valid inequalities. We can clearly observe from Fig. \ref{fig:process_time} that \solution \ is the fastest solution in all cases. More specifically, \solution \ is 499 and 5828 times faster than directly using GUROBI to solve problem (P) when M\_percent = 45\% for Att and M\_percent = 25\% for Cernet, respectively. In Fig. \ref{fig:process_time}(c), we set a time limit of 3600 seconds for solving \eqref{newproblem-a}. We only get the results of M\_percent  = 10\% and 20\%, and we use the CPU time limit 3600 seconds as the CPU time of other cases of M\_percent. In other words, GUROBI failed to solve problem \eqref{newproblem-a} directly within 3600 seconds. However, for all cases of M\_percent, GUROBI can successfully solve problem \eqref{finalproblem} within the time limit. In fact, we can see from Fig. \ref{fig:process_time} that it is much more efficient to solve problem \eqref{finalproblem}  than problem \eqref{newproblem-a} in most cases. These simulation results verify that problem \eqref{finalproblem} is indeed a better formulation than problem \eqref{newproblem-a} because added constraints and inequalities in \eqref{finalproblem} are very effective to speed up the GUROBI solver, and our proposed MapFirst is effective. 

\subsection{Performance of programmable flows}

\begin{figure*}[t]
\centering
\subfigure[\topoa]{
\includegraphics[width=2in]{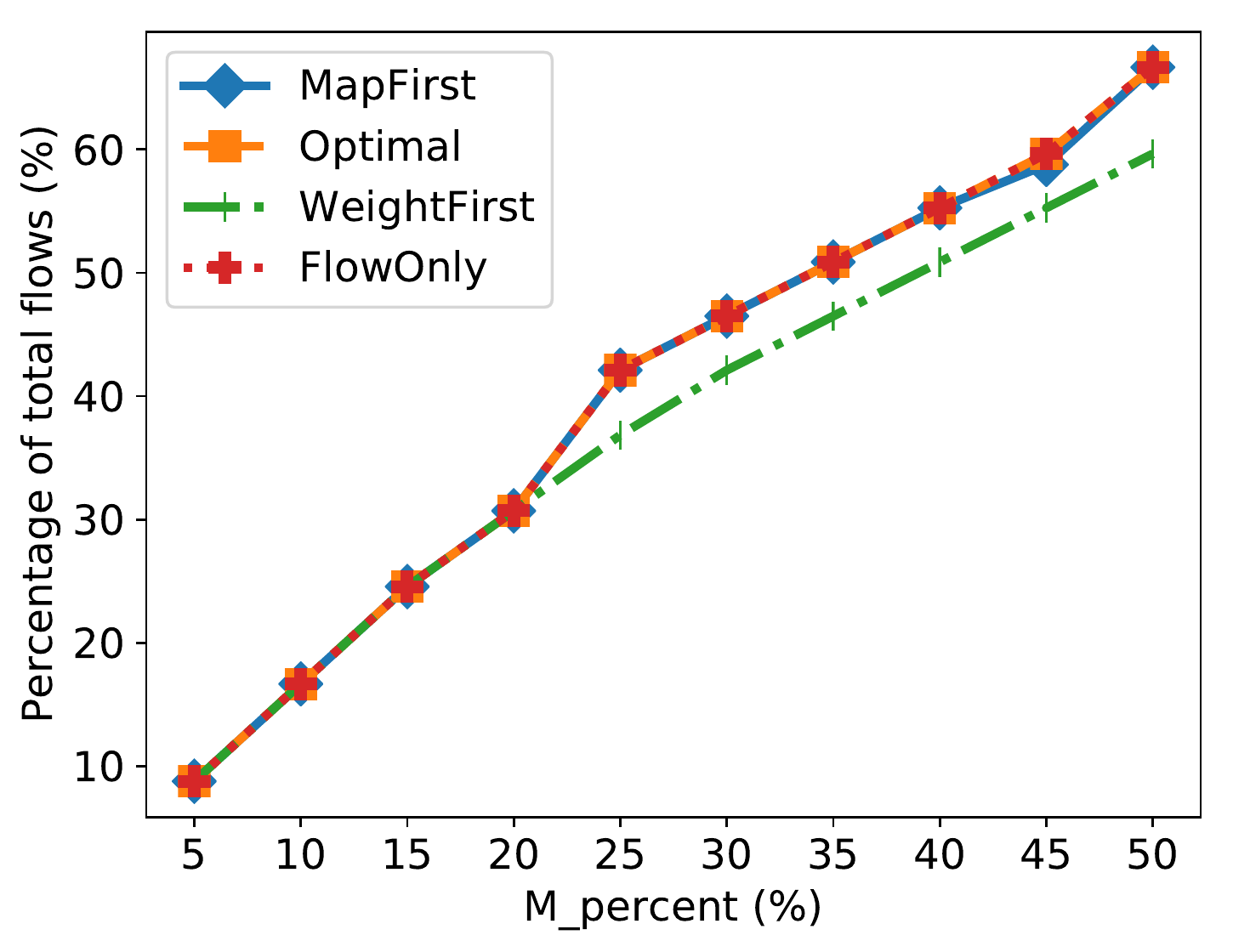}
}
\subfigure[\topob]{
\includegraphics[width=2in]{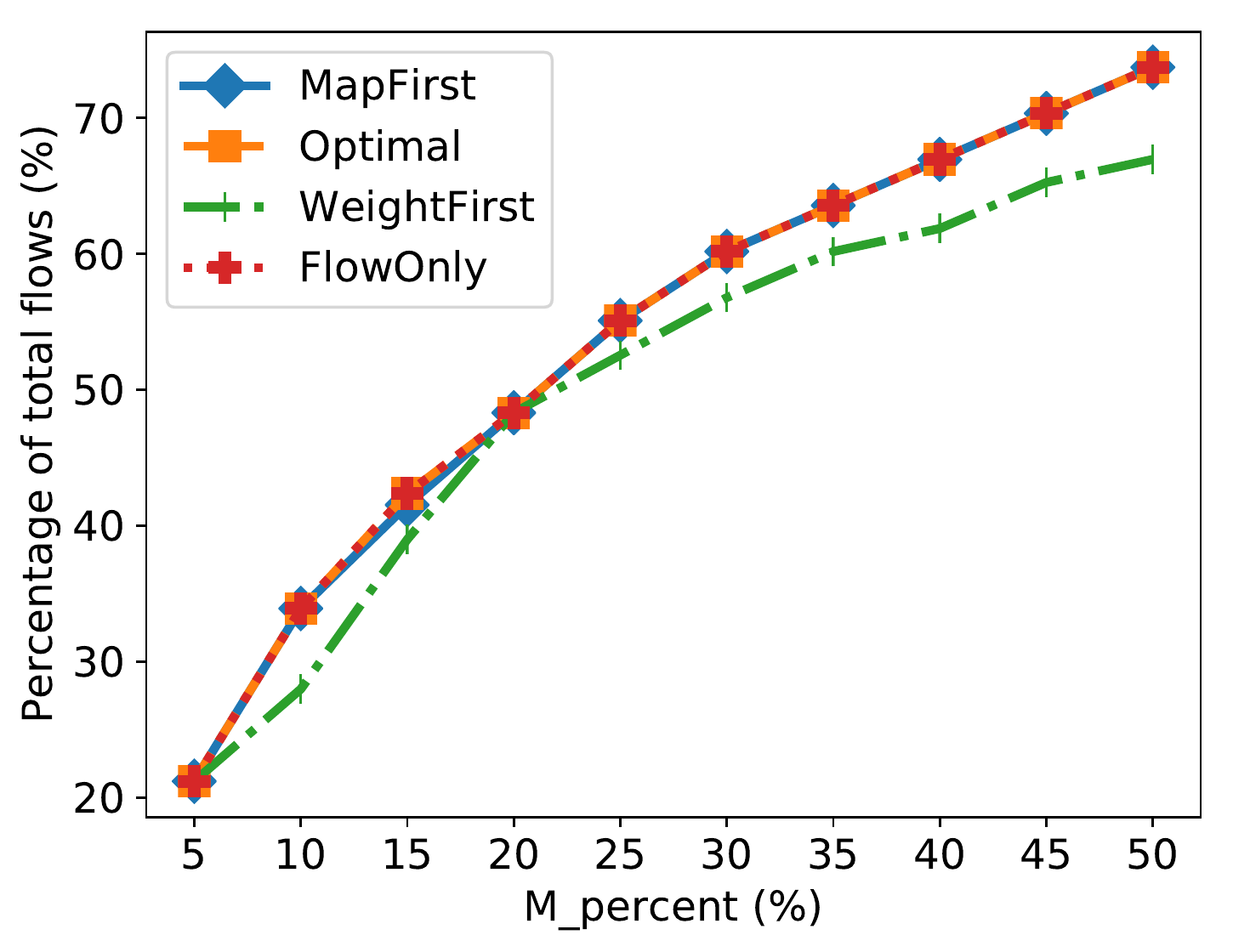}
}
\subfigure[\topoe]{
\includegraphics[width=2in]{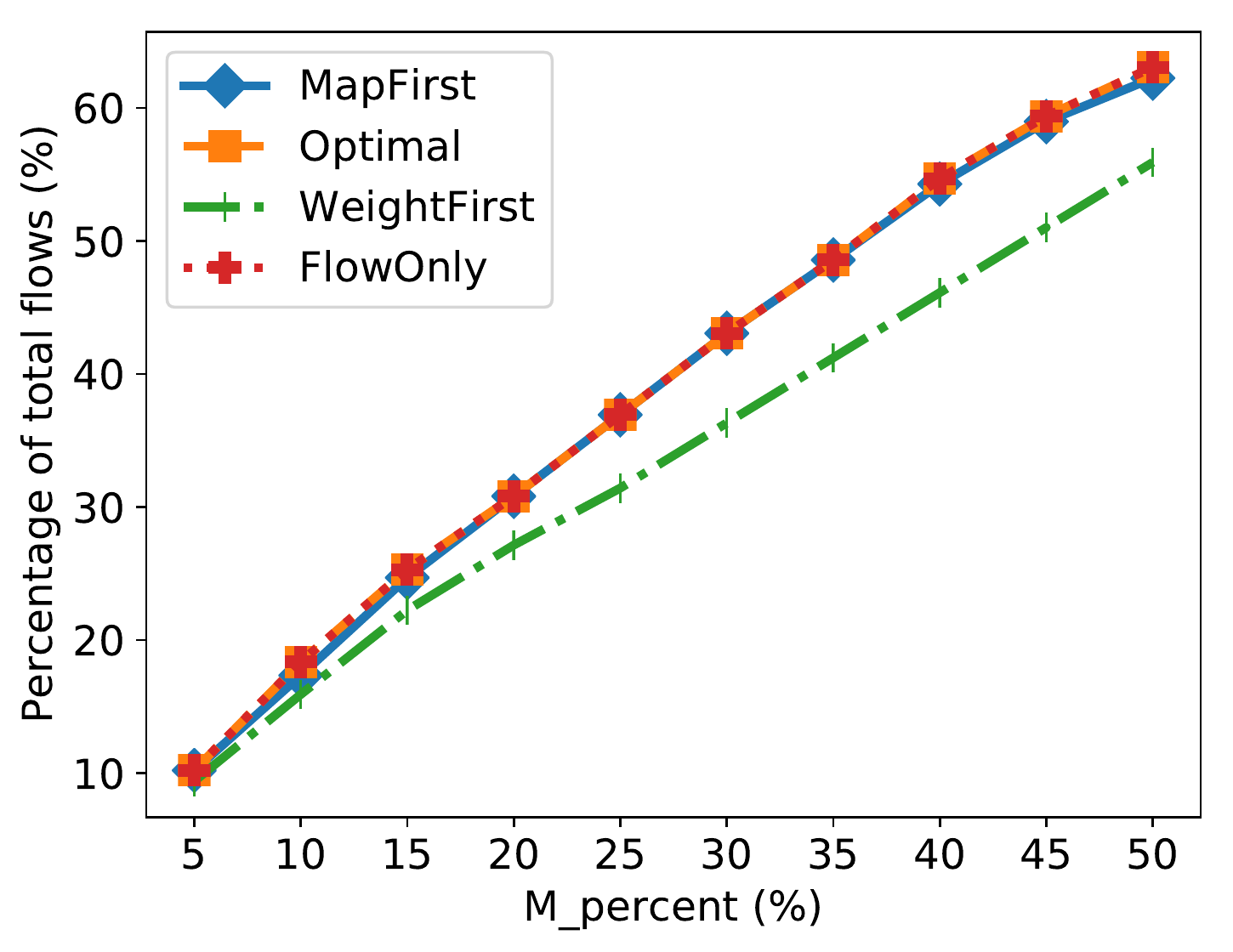}
}
\caption{Number of programmable flows. The higher, the better.}
\label{fig:flow}
\end{figure*}

Fig. \ref{fig:flow} shows the performance of the programmable flows of different algorithms. In all the three topologies, the performance of the four algorithms increases as $M\_percent$ becomes larger. Optimal and FlowOnly are the optimal solutions to maximize the number of programmable flows. We can observe from Fig. \ref{fig:flow}  that WeighFirst's performance is the worst and \solution's performance is very close to that of Optimal and FlowOnly. Recall weight $ \omega_{ij}$ is equal to $R_i - \lambda D_{ij} $ in WeightFirst. The maximum value of $ \omega_{ij}$ is $ R_i$ when deploying a controller at the location of a selected switch. Since $\lambda$ is usually a small value, $R_i$ plays the dominant role in $\omega_{ij}$ and thus WeightFirst first greedily tests the switch-controller pair based on the descending order of the number of flows in switches. Even though WeightFirst considers the delay in the later tests, it only focuses on a single switch-controller pair without the global view of the problem. In sharp contrast, \solution \ also tests the switch-controller pair but based on the result of the LP relaxation. The LP relaxation considers the entire problem to generate its result that reflects the probability of selecting switch-controller pairs. Therefore, the testing order of switch-controller pairs in \solution \ is more efficient than WeightFirst. This is the reason why \solution \ achieves much better performance than WeightFirst. 

\begin{figure*}[t]
\centering
\subfigure[\topoa]{
\includegraphics[width=2in]{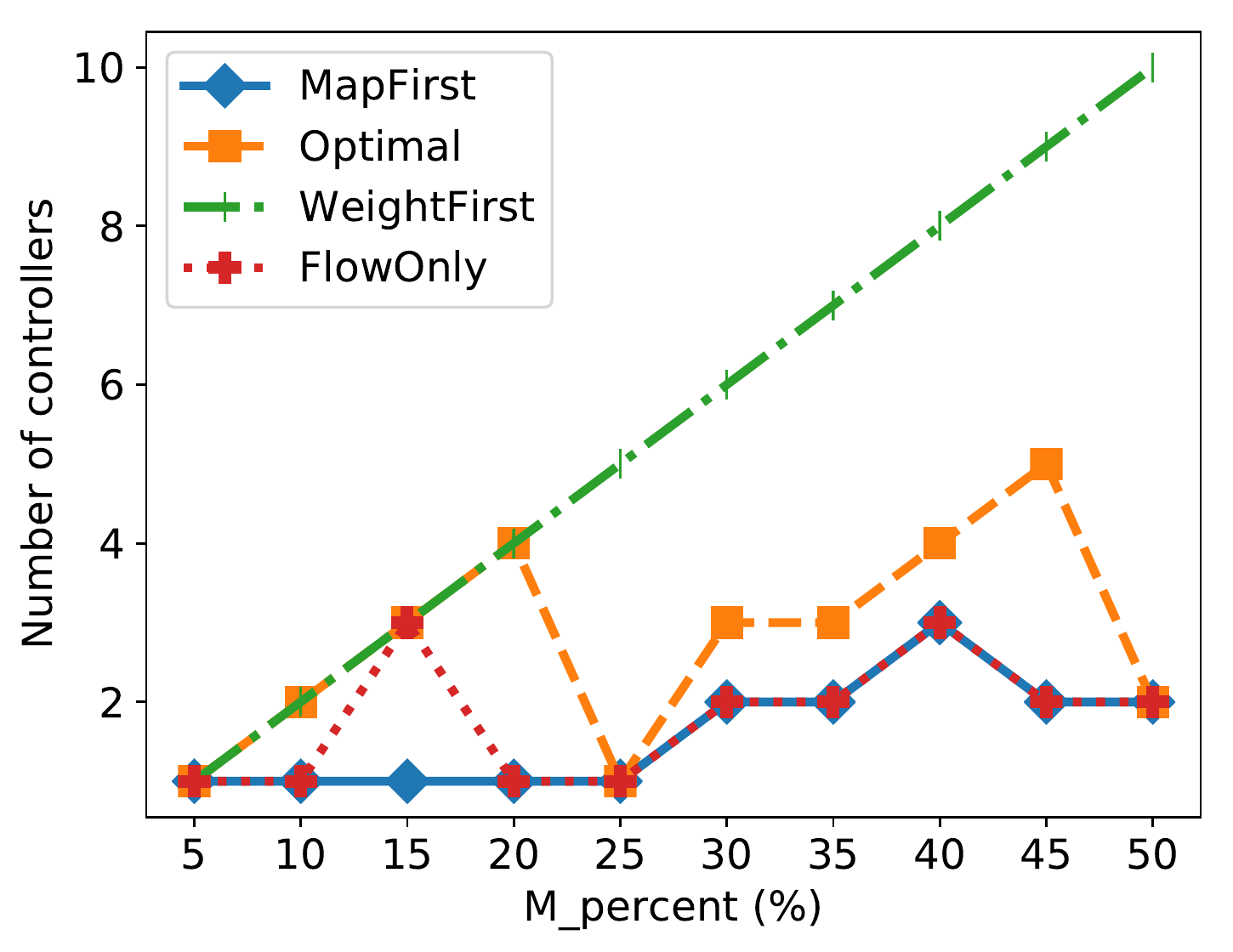}
}
\subfigure[\topob]{
\includegraphics[width=2in]{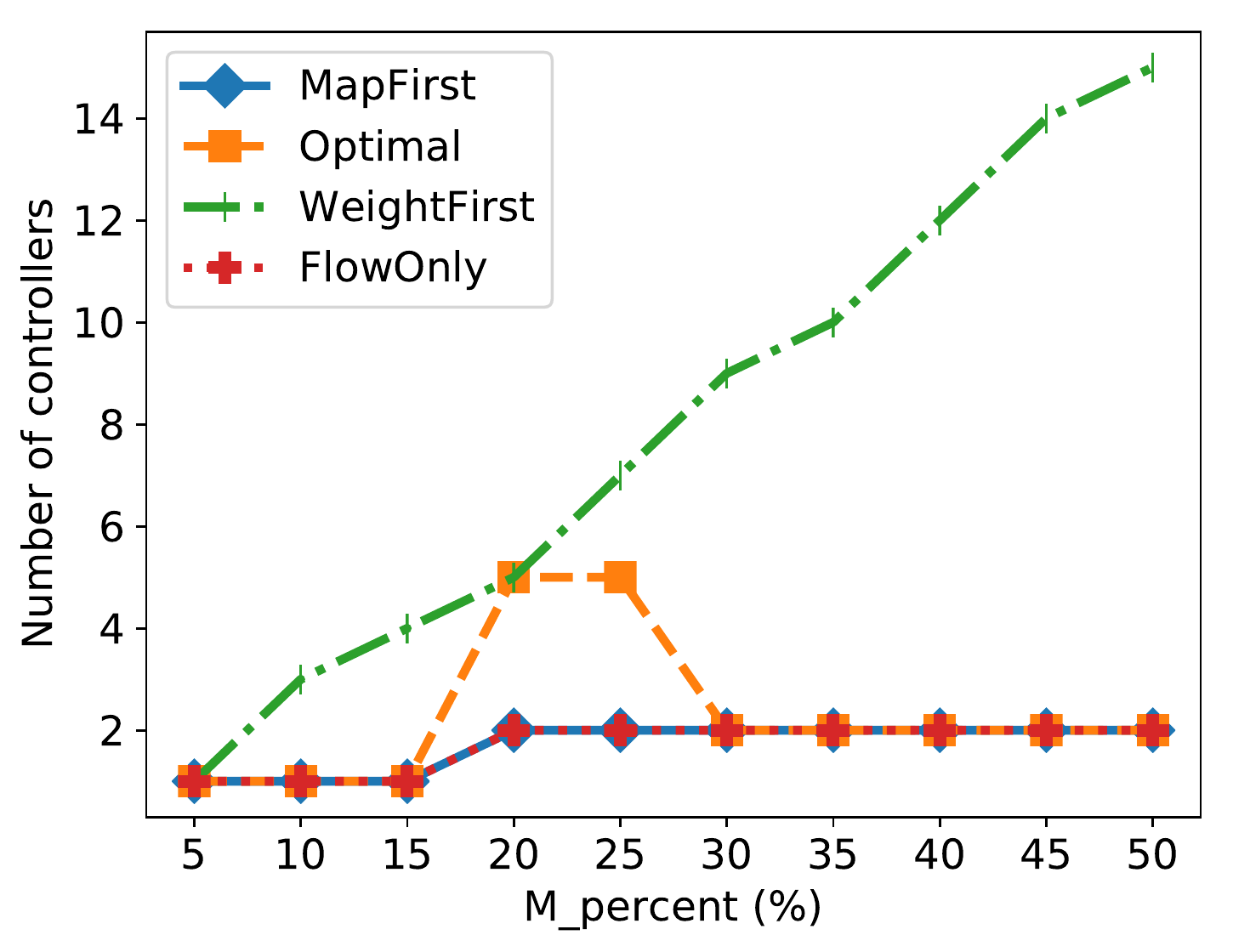}
}
\subfigure[\topoe]{
\includegraphics[width=2in]{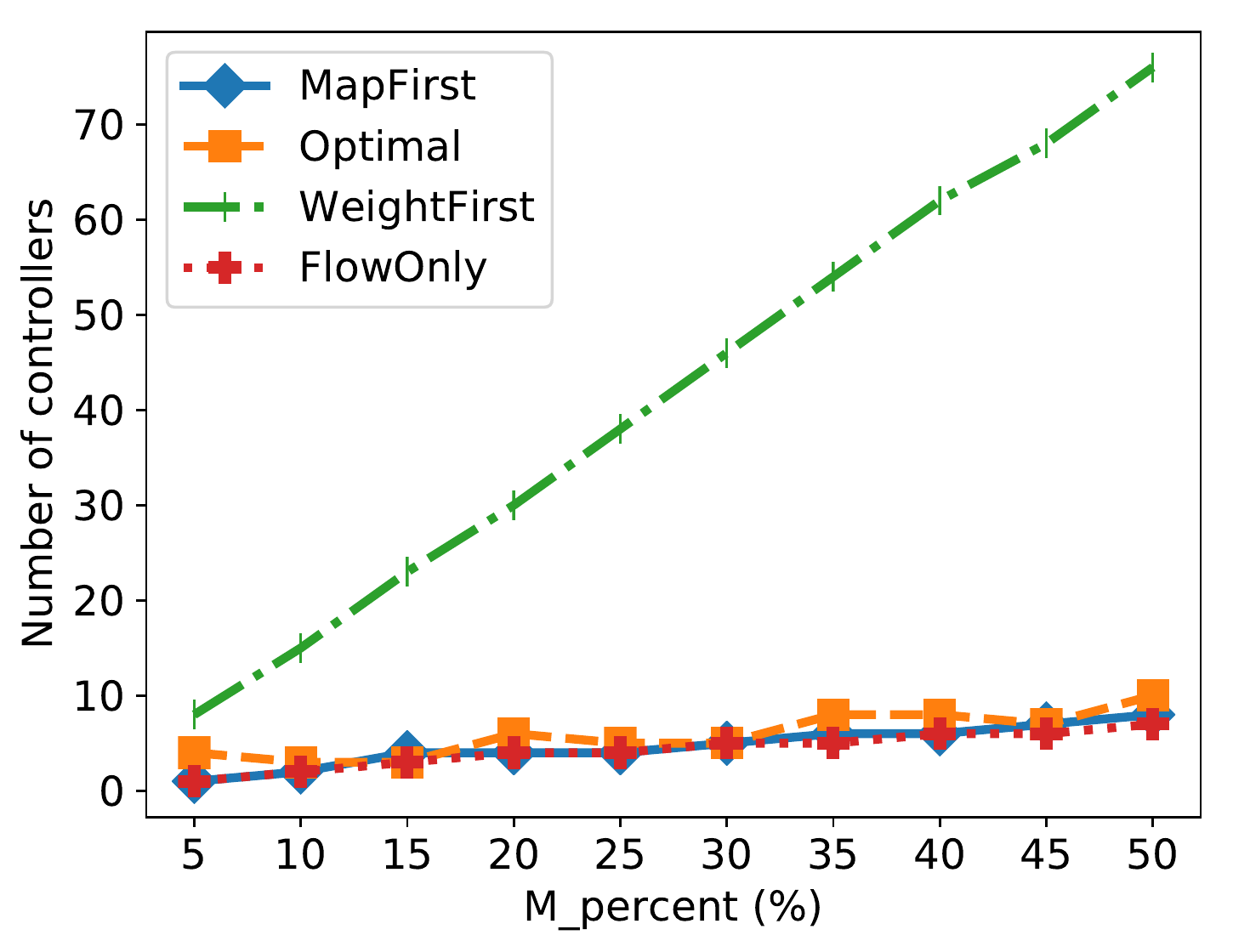}
}
\caption{Number of controllers. The lower, the better.}
\label{fig:con}
\end{figure*}

\subsection{Performance of deployed controllers}

Fig. \ref{fig:con} shows the number of controllers deployed by different algorithms under the three topologies. WeightFirst deploys more controllers than all the other algorithms since it always deploys a controller for each upgraded switch. Thus, under the same upgrade budget, WeightFirst upgrades fewer switches than all the other algorithms. FlowOnly performs the best since it does not consider the propagation delay in its objective function and deploys enough controllers for the upgraded SDN switches. Optimal deploys more switches than FlowOnly and \solution. Recall the cost ratio of an SDN switch and a controller is $\gamma$. If the upgrade budget is not enough to upgrade a switch, the rest budget can also be used to deploy 1 to $\gamma-1$ controllers. In this case, \solution \ and Optimal make different decisions: \solution \ does not deploy extra controllers and just stop the program (see lines \ref{ln:greedy_cost_constraint}-\ref{ln:greedy_cost_constraint_end} in Algorithm \ref{alg:lp}), while Optimal will deploy more controllers to minimize the propagation delay between SDN switches and controllers. 

\subsection{Performance of the propagation delay}
Fig. \ref{fig:delay} shows the propagation delay between SDN switches and controllers. In all topologies, WeightFirst performs the best as it deploys one controller at each SDN switch in most cases. Among all algorithms, FlowOnly performs the worst because problem formulation \eqref{stage1-a} does not consider the propagation delay between SDN switches and controllers. In all the three topologies, Optimal and \solution's propagation delays increase as $M\_percent$ increases, but their increasing rate is much lower than FlowOnly's. Compared to \solution, Optimal achieves a better propagation delay performance since it deploys more controllers to minimize the propagation delay. From Fig. \ref{fig:con} and \ref{fig:delay}, we can see that in most cases, when \solution \ and Optimal deploy the same number of controllers, they have the same performance, except two cases $M\_percent = 50\%$ in Fig. \ref{fig:con}(a) and $M\_percent = 15\%$ in Fig. \ref{fig:con}(b). In these two cases, Optimal performs better than \solution \ even though they deploy the same number of controllers. 

\begin{figure*}[t]
\centering
\subfigure[\topoa]{
\includegraphics[width=2in]{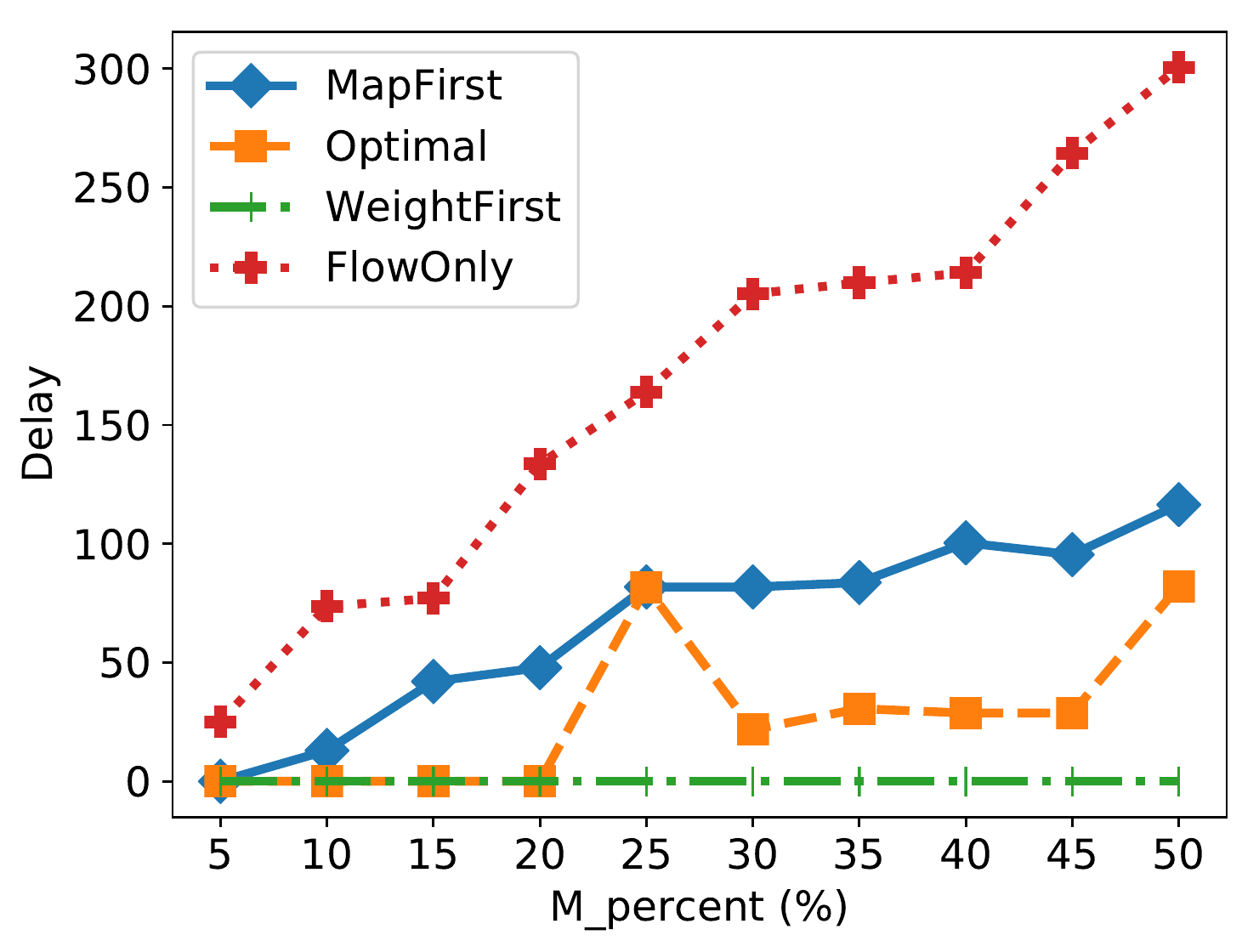}
}
\subfigure[\topob]{
\includegraphics[width=2in]{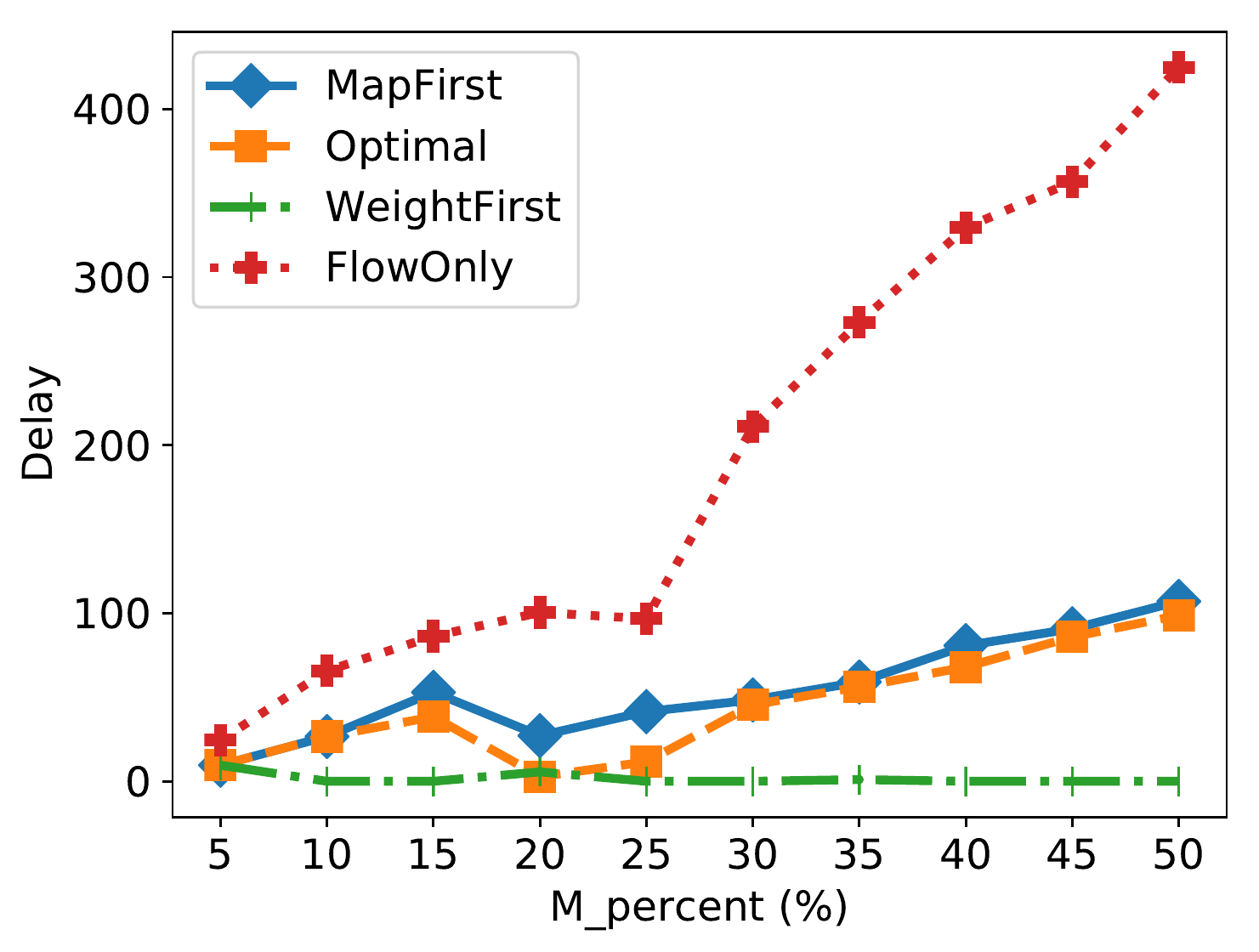}
}
\subfigure[\topoe]{
\includegraphics[width=2in]{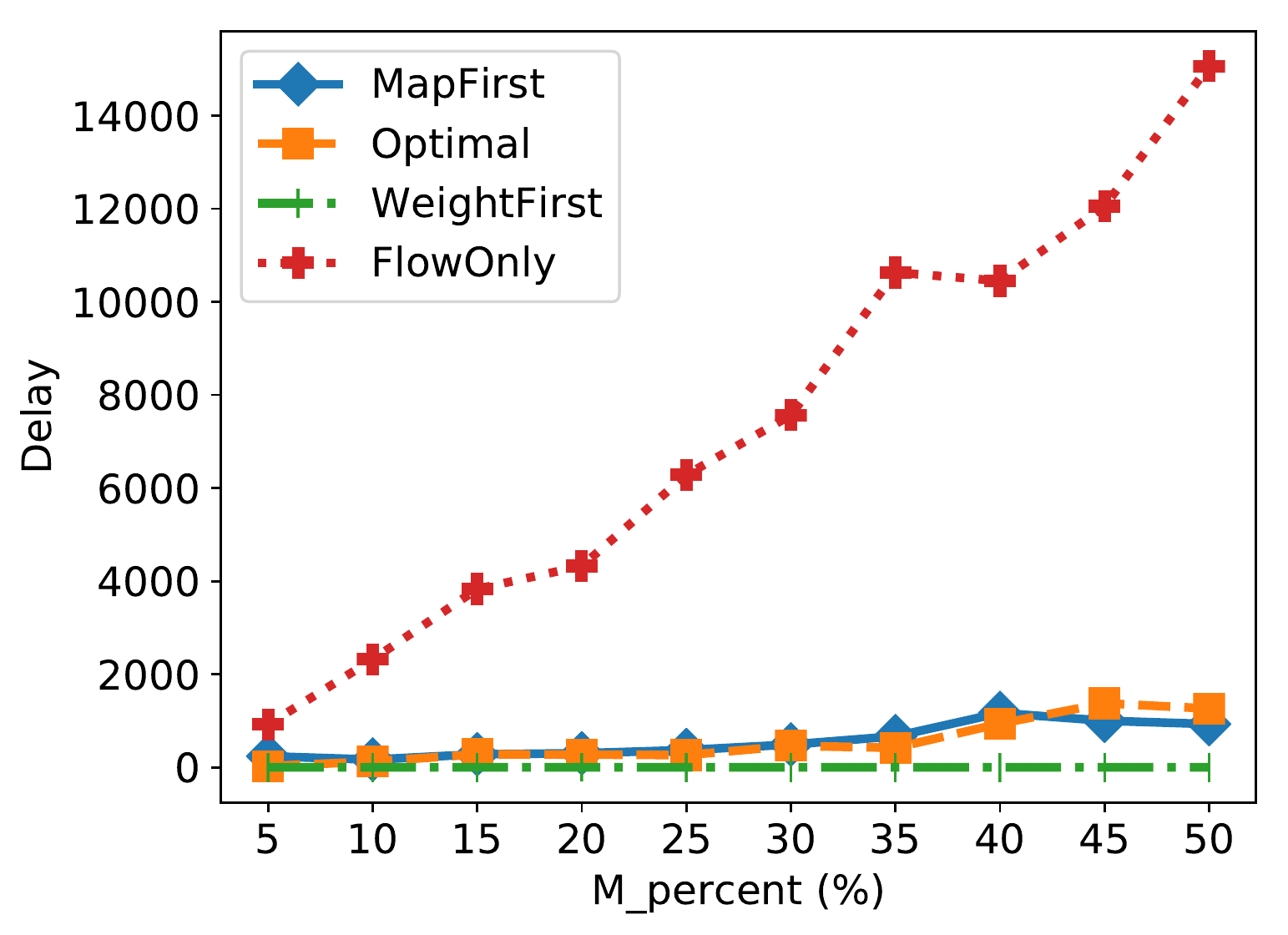}
}
\caption{Propagation delay between SDN switches and controllers. The lower, the better.}
\label{fig:delay}
\end{figure*}

\begin{figure*}[t]
\centering
\subfigure[\topoa]{
\includegraphics[width=2in]{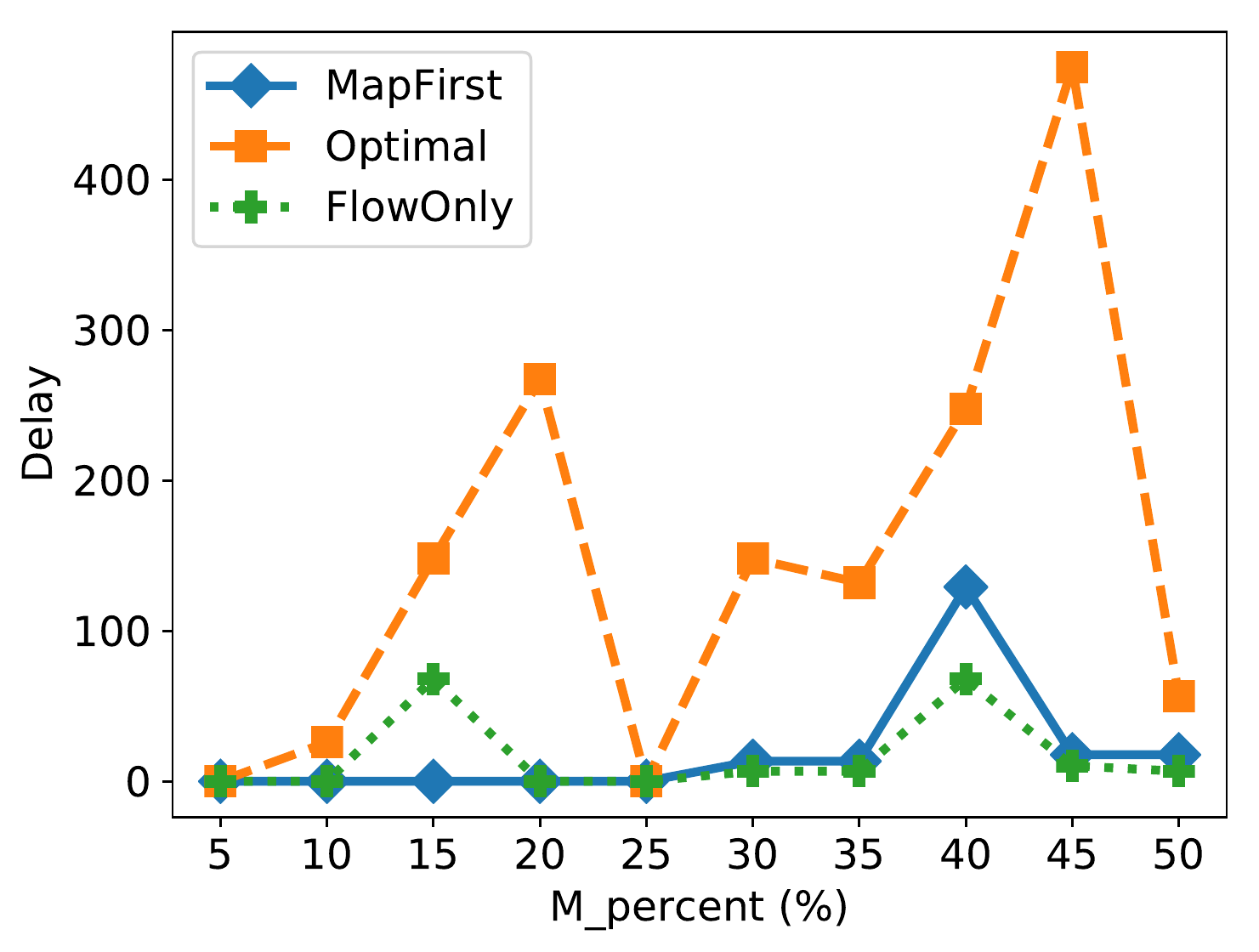}
}
\subfigure[\topob]{
\includegraphics[width=2in]{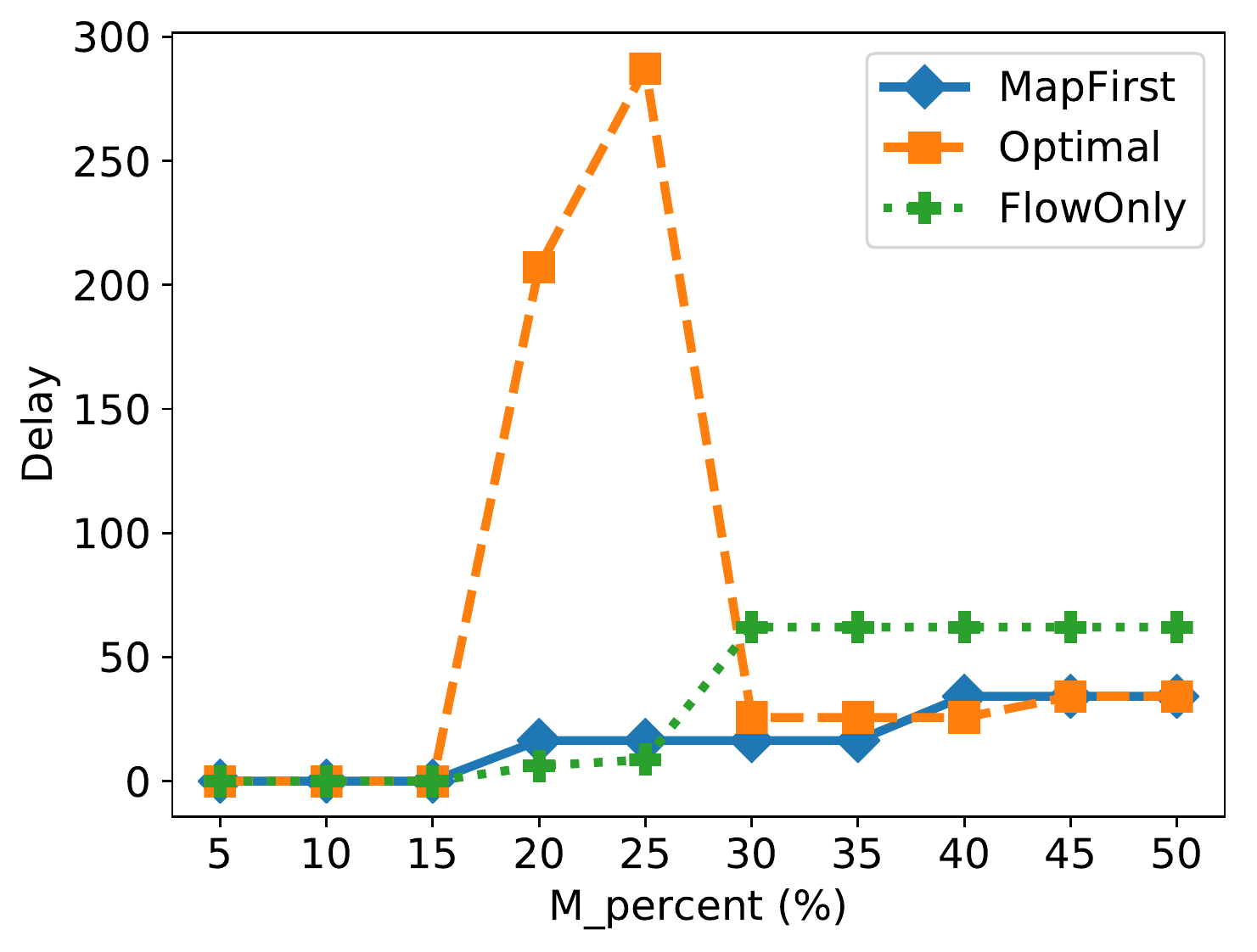}
}
\subfigure[\topoe]{
\includegraphics[width=2in]{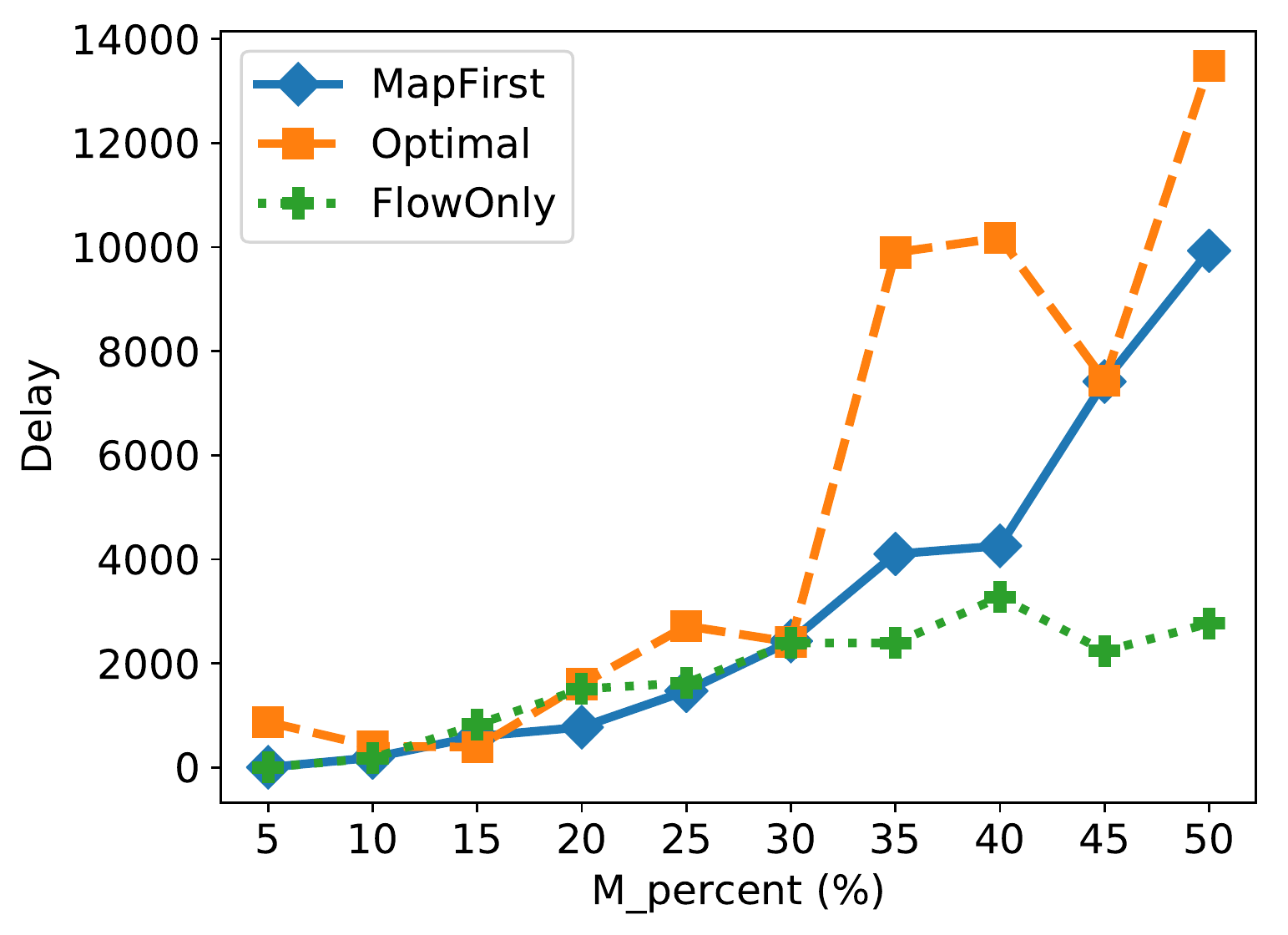}
}
\caption{Propagation delay between controllers. The lower, the better.}
\label{fig:con-delay}
\end{figure*}

Fig. \ref{fig:con-delay} shows the propagation delay between controllers. We do not show the results of WeightFirst because its performance is very bad. We can observe, from the figure, Optimal performs the worst since it does not consider this factor in the problem formulation (An interesting future is to consider the propagation delay in the problem formulation). FlowOnly performs better than Optimal since it deploys much fewer controllers than Optimal (see Fig. \ref{fig:con}), and the deployed controllers are near to each other. In most cases, \solution \ has the best performance. This is because \solution \ always deploys the minimum number of controllers. In our problem formulation, we use the budget constraint to implicitly limit the number of controllers. In particular, our objective is to maximize the number of flows, which is equivalent to maximize the number of upgraded SDN switches. Given the upgrade budget, upgrading more switches will reduce the number of controllers to deploy.

\begin{figure*}[t]
\centering
\subfigure[\topoa]{
\includegraphics[width=2in]{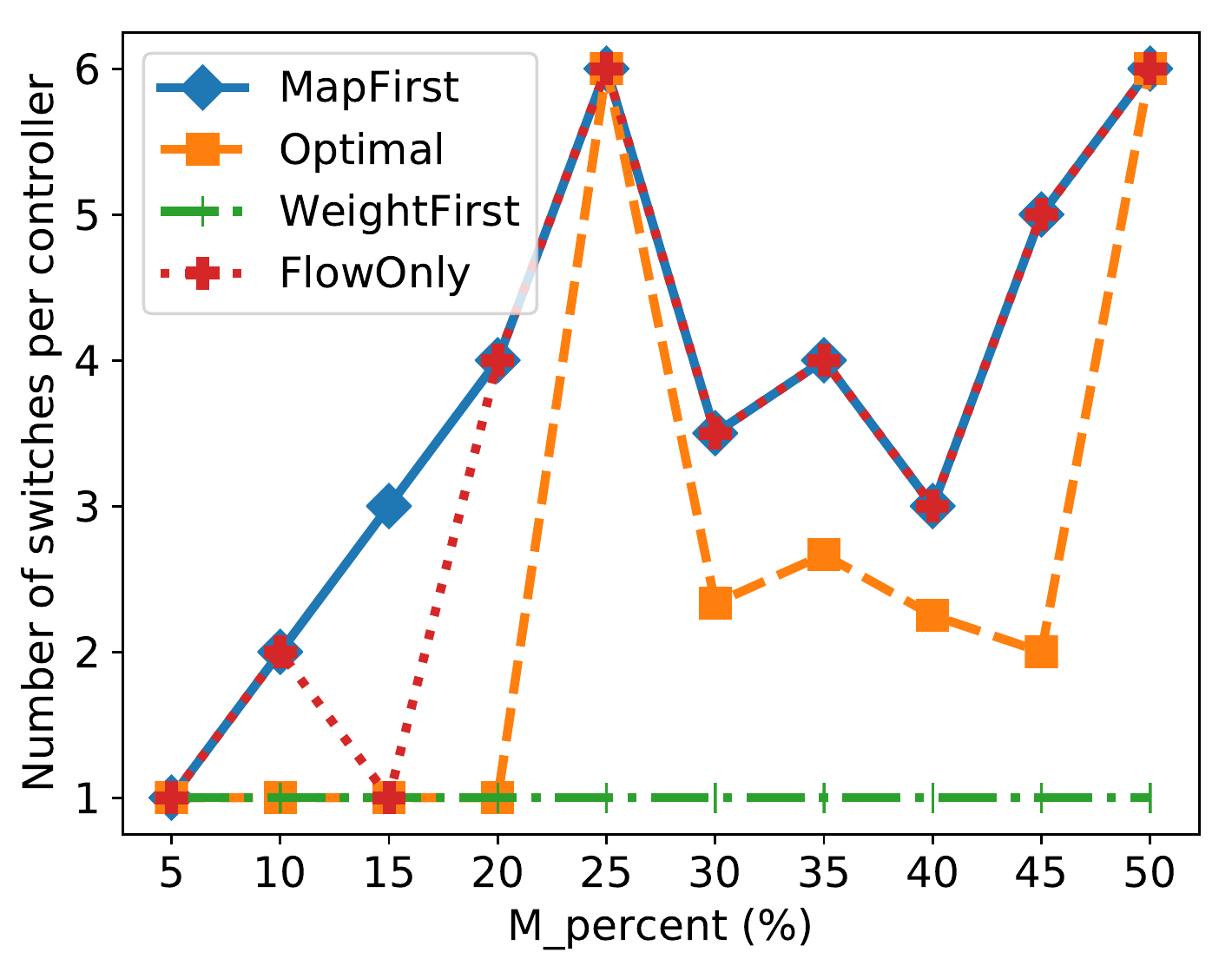}
}
\subfigure[\topob]{
\includegraphics[width=2in]{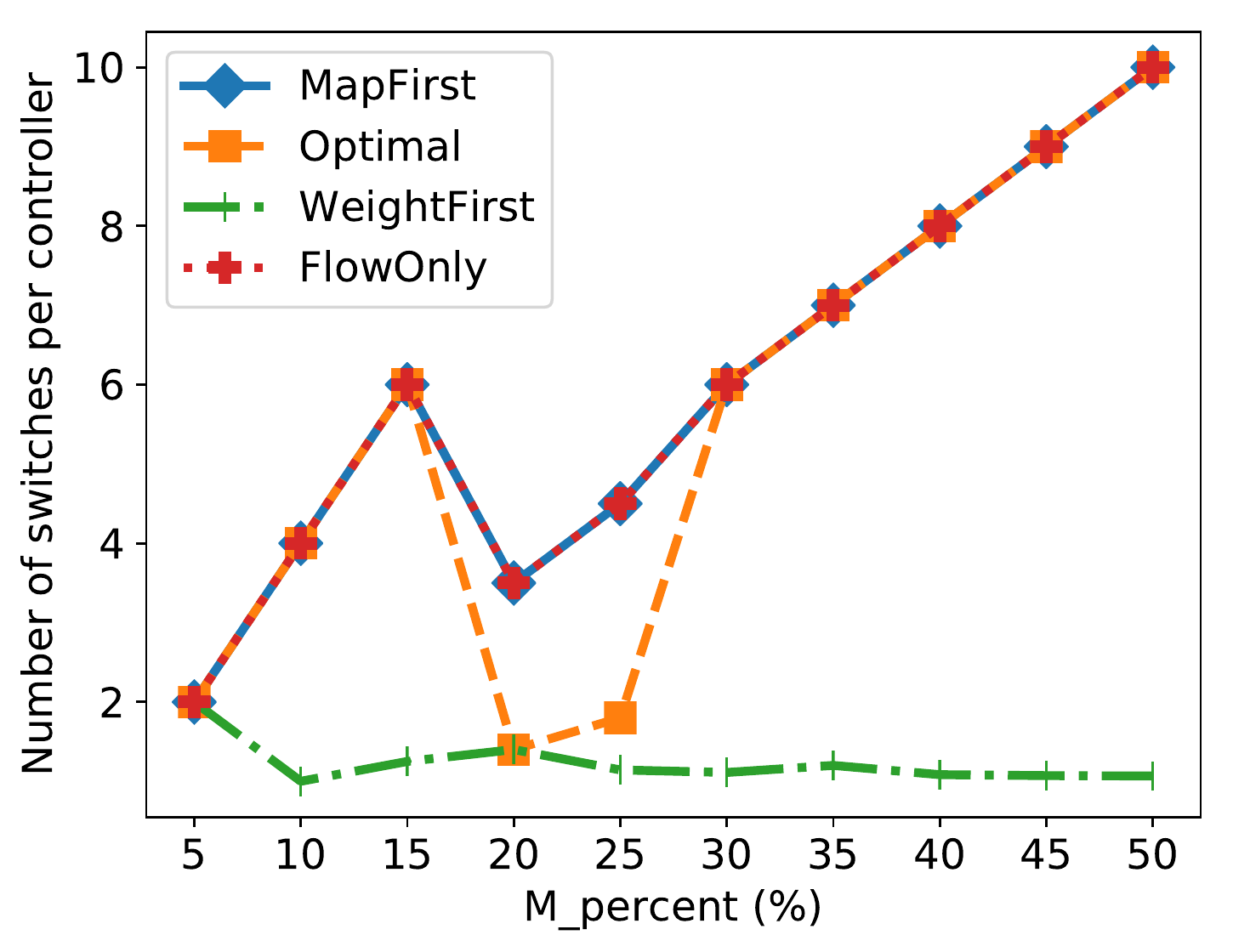}
}
\subfigure[\topoe]{
\includegraphics[width=2in]{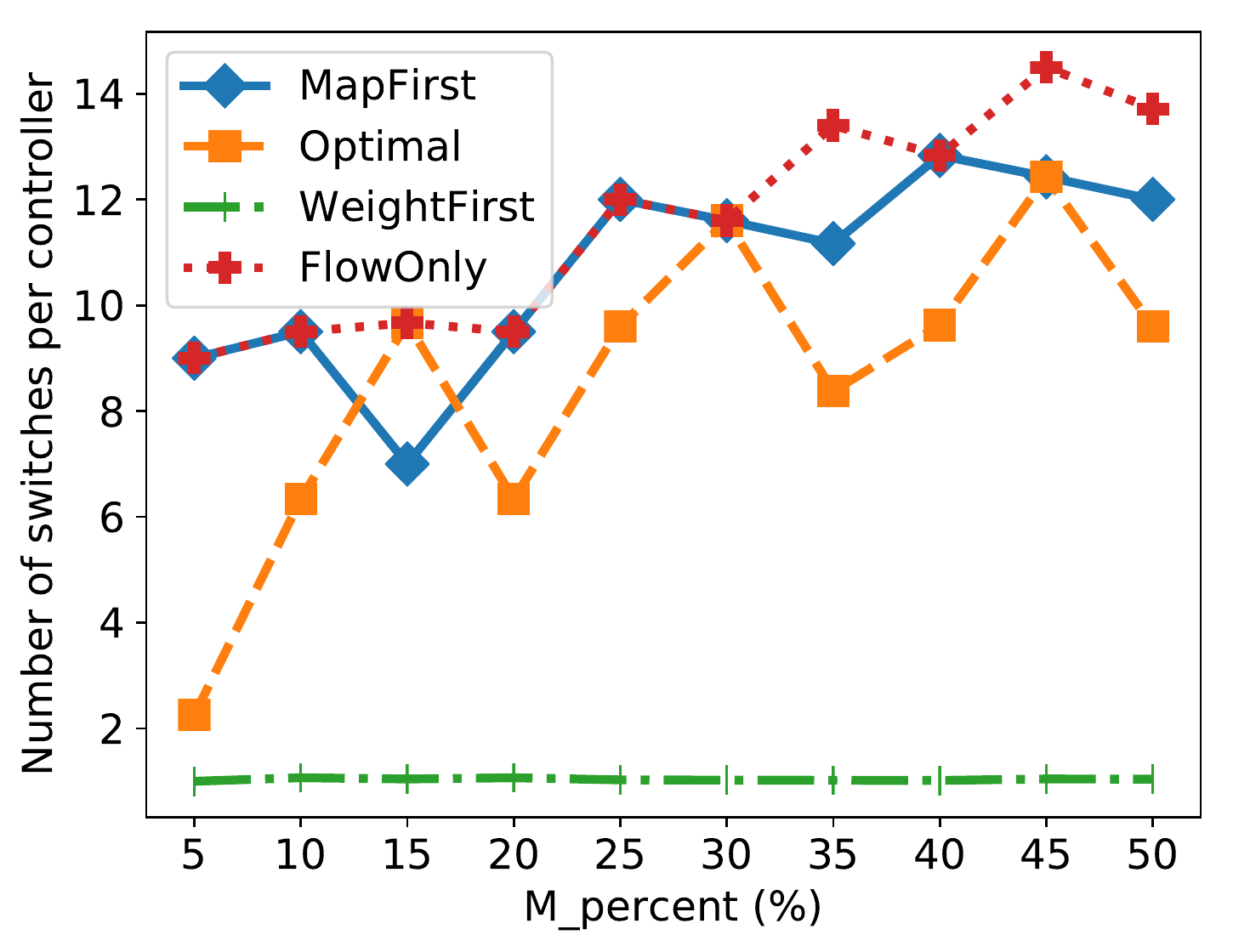}
}
\caption{Ratio of the number of upgraded switches to the number of deployed controllers. The higher, the better.}
\label{fig:ratio}
\end{figure*}

\subsection{Performance of controller control ability on switches}
We use the ratio of the number of switches to the number of controllers to measure the controller control ability on switches. If a controller can control many switches, it will simplify the network control. Fig. \ref{fig:ratio} shows the controller control ability performance of all algorithms. In most cases, \solution \ performs the best since it upgrades the same number of switches as Optimal but deploys fewer controllers.

\subsection{Summary of simulation findings}
From our simulation results, we can conclude that the following two design principles of \solution \ are crucial: (1) among the three variables (i.e., the selection of upgraded switches, the selection of deployed controllers, and the mappings between SDN switches and controllers), the mapping variables are more important than the other two variables since they reflect the interrelationship between the other two variables, and (2) the solution of the LP relaxation can effectively reveal the importance of the mappings and thus the solution structure of the problem. WeightFirst only takes the first principle into consideration and performs the worst. Optimal performs the best in terms of the two objectives: maximizing the number of programmable flows and minimizing the propagation delay between SDN switches and controllers. However, it will deploy more controllers and has a higher propagation delay among controllers than \solution. More importantly, its worst-case complexity is significantly larger than \solution \ and WeightFirst. In summary, compared to Optimal, \solution \ not only achieves a comparable performance of the two objectives with significantly lower complexity but also deploys less number of controllers and thus has a shorter propagation delay among controllers. Considering all performance metrics, \solution \ achieves the best performance. Therefore, to efficiently handle a similar problem with interrelated variables, one can use our design principles.

\section{Related works}
\label{relatedwork}
The evolution from legacy networks to SDN is a long journey \cite{kreutz2015software}. Considering SDN still being a fast developing technology, the full deployment of SDN not only requires a huge upgrade cost but also brings unexpected risks. The hybrid SDN is a promising alternative option and is receiving quickly increasing attention from industry and academia. Vissicchio et al. \cite{vissicchio2014opportunities} first analyzed different models of hybrid SDNs. The research of the hybrid SDN can be categorized into two classes based on the application scenarios:

\subsection{Layer 2 hybrid SDN}
Panopticon \cite{levin2014panopticon} proposed an optimization framework to determine the partial SDN deployment and assumed that each flow in the network traverses at least one SDN switch. HybNET \cite{lu2013hybnet} designed a configuration mechanism to automatically translate network configurations between legacy networks and SDN networks. Telekinesis \cite{jin2015telekinesis} manipulated forwarding entries in the legacy switches with a customized flow control primitive and enabled the forwarding flow on a user-defined path rather than the path in the spanning tree. Magneto \cite{jin2017magneto} improved the work in \cite{jin2015telekinesis} by providing more fine-grained flow controls and quick and stable user-defined path establishment.

\subsection{Layer 3 hybrid SDN}
\subsubsection{Switch upgrade}
Poularakis et al. \cite{poularakis2017one} and Jia et al. \cite{jia2016incremental} proposed to incrementally upgrade  SDN to maximize the amount of programmable traffic under the given upgrade budget constraint. Xu et al. \cite{xu2017incremental} compared the performance of hybrid SDNs by either replacing legacy devices with SDN devices or adding new SDN devices.  Caria et al. \cite{caria2015divide} exploited the property of the network topologies and considered the centrality in the network to update switches in a hybrid SDN. 

\subsubsection{Traffic engineering}
Agarwal et al. \cite{agarwal2013traffic} first formulated an optimization problem for achieving traffic engineering in SDN with the partial deployment of SDN devices. Fibbing \cite{vissicchio2015central} introduced a centralized control over distributed IP routing by injecting crafted routing messages via OSPF and enhanced the flexibility, diversity, and reliability of L3 routing. Chu et al. \cite{chu2015congestion} designed an approach to fast recover from single link failure while maintaining load-balancing performance for the post-recovery network. Wang et al. \cite{wang2016saving} and Jia et al. \cite{jia2018intelligent} explored power saving in Layer 3 hybrid SDN. Guo et al. \cite{guo2014traffic} proposed to adjust the weights of links and flow split ratio at the SDN nodes to achieve load balancing in a given hybrid SDN. Hong et al. \cite{hong2016incremental} proposed to satisfy a variety of traffic engineering goals in the hybrid SDN. However, all the above works did not introduce the real deployment of the SDN control plane and did not consider the impact of the control plane deployment on the hybrid SDN.

\subsection{Multi-controller data plane}
\subsubsection{Controller deployment}
Controller deployment is an important issue in multi-controller research. In WANs, the propagation latency is the critical part of the total latency \cite{heller2012controller}\cite{yao2014capacitated}, and some works aim to minimize the propagation latency among controllers and switches \cite{hu2018multi}\cite{hu2018bidirectional}. In data center networks, Wang et al. \cite{wang2016dynamic} proposed to dynamically map switches to controllers to mitigate the load imbalance among controllers and reduce the response time. Wang et al. \cite{wang2017efficient} considered the maintenance cost of the controller cluster and assigned controllers to minimize the total cost of controller response time and maintenance on the cluster of controllers.

\subsubsection{Resiliency}
Resiliency is a critical concern for designing the multi-controller data plane. He et al. \cite{li2014byzantine} presented to manage each SDN device with multiple controllers at the cloud to resist Byzantine attacks on controllers and the communication links between controllers and SDN switches. Hu et al. \cite{hu2018adaptive} \cite{hu2019dynamic} designed a fault-tolerant multi-controller control plane that mitigates the performance degradation of the controller chain failure caused by unreasonable slave controller assignment. MORPH \cite{MORPH} is a multi-controller framework, which is tolerant to unavailability failures of SDN controllers and Byzantine failures caused by malicious attacks by efficiently distinguishing and localizing faulty controller instances and appropriately reconfiguring the control plane. In the future, we will consider this factor to extend the work in this paper.

\section{Conclusion and future work}
\label{conclusion}
In this paper, we considered the impact of the control plane on the hybrid SDN and proposed to deploy the multi-controller control plane during upgrading a legacy network to a hybrid SDN. We formulated an optimization problem that jointly maximizes the number of flows from SDN switches and minimizes the propagation delay of flow requests between the SDN's control plane and data plane under the given upgrade budget constraint. By carefully analyzing the problem's structure, we proposed efficient solutions to solve the problem. The simulation results based on the real network topologies show the effectiveness and efficiency of our solutions. In the future, we will consider other practical factors in our problem formulation, such as the variation of the traffic pattern, the queuing delay in controllers, and multiple network upgrades.

\bibliographystyle{IEEEtran}
\bibliography{hsdn}
\end{document}